\newcommand*\diff{\mathop{}\!\mathrm{d}}
\DeclareMathOperator*{\argmax}{arg\,max}
\newcommand*{\rom}[1]{\expandafter\@slowromancap\romannumeral #1@}
\newtheorem{remark}{Remark}[section]
\newtheorem{theorem}{Theorem}[section]
\definecolor{OliveGreen}{rgb}{0,0.6,0}
\colorlet{texcscolor}{blue!50!black}
\colorlet{texemcolor}{red!70!black}
\colorlet{texpreamble}{red!70!black}
\colorlet{codebackground}{black!25!white!25}
\newcommand*{\dd}{\mathop{\kern0pt\mathrm{d}}\!{}}
\newcommand\restr[2]{{% we make the whole thing an ordinary symbol
  \left.\kern-\nulldelimiterspace % automatically resize the bar with \right
  #1 % the function
  \vphantom{\big|} % pretend it's a little taller at normal size
  \right|_{#2} % this is the delimiter
  }}
\lstdefinestyle{siamlatex}{%
  style=tcblatex,
  texcsstyle=*\color{texcscolor},
  texcsstyle=[2]\color{texemcolor},
  keywordstyle=[2]\color{texemcolor},
  moretexcs={cref,Cref,maketitle,mathcal,text,headers,email,url},
}
\DeclareTotalTCBox{\code}{ v O{} }
{ %fontupper=\ttfamily\color{texemcolor},
  fontupper=\ttfamily\color{black},
  nobeforeafter,
  tcbox raise base, 
  colback=codebackground,colframe=white,
  top=0pt,bottom=0pt,left=0mm,right=0mm,
  leftrule=0pt,rightrule=0pt,toprule=0mm,bottomrule=0mm,
  boxsep=0.5mm,
  #2}{#1}
\title{Treatment-induced shrinking of {tumour aggregates}: A nonlinear volume-filling chemotactic approach}
\author{Luis Almeida\thanks{
Sorbonne Universit\'e, Inria, CNRS, Laboratoire Jacques-Louis Lions, F-75005 Paris. email: { luis.almeida@sorbonne-universite.fr}}
\and  Gissell Estrada-Rodriguez \thanks{Sorbonne Universit\'e, Laboratoire Jacques-Louis Lions, F-75005 Paris. email: {estradarodriguez@ljll.math.upmc.fr}}
\thanks{Corresponding author}
\and Lisa Oliver\thanks{Centre de Recherche en Cancérologie et Immunologie Nantes-Angers,
UMR 1232, Université de Nantes. email: { Lisa.Oliver@univ-nantes.fr}}
\and Diane Peurichard\thanks{
Sorbonne Universit\'e, Inria, CNRS, Laboratoire Jacques-Louis Lions, F-75005 Paris. email: { diane.a.peurichard@inria.fr}}
\and Alexandre Poulain \thanks{Sorbonne Universit\'e, Laboratoire Jacques-Louis Lions, F-75005 Paris. email: {poulain@ljll.math.upmc.fr}}
\and Francois Vallette\thanks{Centre de Recherche en Cancérologie et Immunologie Nantes-Angers,
UMR 1232, Université de Nantes. email: { Francois.Vallette@univ-nantes.fr}}}
\author{Luis Almeida \thanks{Laboratoire Jacques-Louis Lions,
           Sorbonne Universit\'{e},
           Paris, 75005,
           France
    (\email{luis.almeida@sorbonne-universite.fr
}, \email{}, \email{Lisa.Oliver@univ-nantes.fr}, \email{diane.a.peurichard@inria.fr}, \email{poulain@ljll.math.upmc.fr},\email{Francois.Vallette@univ-nantes.fr}).} \and
  Gissell Estrada-Rodriguez${}^{\dag *}$
  \and
  Lisa Oliver${}^{\star}$ 
  \and
  Diane Peurichard${}^{*}$\and
  Alexandre Poulain${}^{*}$\and Francois Vallette${}^{\star}$ 
}
\date{}
\begin{document}

\maketitle

\begin{abstract}
Motivated by experimental observations in 3D/organoid cultures derived from glioblastoma, %we develop a mathematical model where tumour aggregate formation is obtained as the result of nutrient-limited cell proliferation coupled withchemotaxis-based cell movement. 
we propose a novel mechano-transduction mechanism where the introduction of a chemotherapeutic treatment induces mechanical changes at the cell level. We analyse the influence of these individual mechanical changes on the properties of the aggregates obtained at the population level. We employ a nonlinear volume-filling chemotactic system of partial differential equations, where the elastic properties of the cells are taken into account through the so-called \emph{squeezing probability}, which depends on the concentration of the treatment in the extracellular microenvironment.  We explore two scenarios for the effect of the treatment: first, we suppose that the treatment acts only on the mechanical properties of the cells and, in the second one, we assume it also prevents cell proliferation. We perform a linear stability analysis which enables us to identify the ability of the system to create patterns and fully characterize their size. Moreover, we provide numerical simulations in 1D and 2D that illustrate the shrinking of the aggregates due to the presence of the treatment. 
%\keywords{Glioblastoma \and Chemotaxis \and Volume-filling approach \and Stability analysis}
% \PACS{PACS code1 \and PACS code2 \and more}
% \subclass{MSC code1 \and MSC code2 \and more}
\end{abstract}

\section{Introduction}

Glioblastomas (GBM) are solid tumours characterised by intra- and inter-tumoural heterogeneity and resistance to conventional treatments that result in a poor prognosis \cite{louis20162016}. They are the most common and aggressive primary brain tumour in adults. Standard treatments include surgical resection (when possible) combined with radiotherapy and chemotherapy using the DNA alkylating agent Temozolomide (TMZ) \cite{stupp2005radiotherapy}. In fact, the overall survival of treated patients is about 15 months versus 3 months without treatment, with fewer than 5\% of patients surviving longer than 5 years \cite{stupp2009effects}. 

One reason behind this relative therapeutic failure is the poor response of GBM tumours to this chemotherapeutic treatment, hypothesized to be due to their plasticity. Several studies have looked for the genetic compounds of TMZ-resistant cells focusing on the genes responsible for DNA mismatch repair protein \cite{talhaoui2017aberrant}, while other studies focused on  spatial and temporal variations in signalling pathways, which lead to functional and phenotypic changes in GBM \cite{lisaetal}.  Despite all these works, it remains difficult, from a biological and medical perspective, to investigate the connections between clinically observable glioma behaviour and the underlying molecular and cellular processes. The challenge is to integrate the theoretical and empirical acquired knowledge to better understand the mechanisms and factors that contribute to GBM resistance to treatment. In this context, mathematical models provide useful tools towards identifying links between different phenomena and how they are affected by the different therapeutic strategies. Much effort has been dedicated to the modelling of GBM  formation and invasion of the surrounding tissue, as well as to improving diagnosis and treatment. The exhaustive review \cite{alfonso2017biology} discusses  different modelling approaches as well as some of the main mechanisms that are observed in GBM formation and invasion. 

Recent studies highlighted the role of the communication between the tumour cells and the Tumor Micro-Environment (TME) and the properties of the Extra-Cellular Matrix (ECM) on tumour evolution and invasion \cite{di2018mining,brandao2019astrocytes,wurth2014cxcl12}. Cancer cells have been shown to respond to chemical and mechanical signals from components of the TME and \emph{vice versa}, and the interaction of tumour cells with the TME has been the subject of recent biological surveys \cite{hanahan2011hallmarks,fouad2017revisiting}. Many \emph{in vitro} (and \emph{ex vivo}) experiments have shown that cells that are cultured on ECM often have a tendency to form aggregate patterns that depend on the particular cell lines and physical properties of the media \cite{friedl2004prespecification}. %\sout{The exact consequences of the dynamic interplay between heterogeneous cellular entities and their response to alterations in the TME  have not yet been elucidated.}
{Biological evidence presented in \cite{maurer2019loss,kim2014cd44,chen2018feedforward} suggests that the formation of aggregates in glioma cells can be explained through a chemotaxis process (\emph{i.e.} the ability of cells to move along a chemical gradient), rather than, \emph{e.g.}, cell-cell adhesion. Chemokines, cytokines and growth factors involved in chemotaxis have been shown to be affected by the concentration of cells and therapies in most cancers and in particular in GBM \cite{roussos2011chemotaxis}.}

From a mathematical viewpoint, cell migration in the extracellular microenvironment  and the organisation of cells in response to chemical and mechanical cues are commonly studied using continuum descriptions based on differential equations  \cite{AGOSTI2020110203,painter2009modelling}. In a continuous setting, the chemotactic behaviour of cells is often modelled using a  Keller-Segel system of equations \cite{keller1971model}. This model was originally proposed for pattern formation in bacterial populations but turned out to be pertinent to describe a wide variety of self-organisation behaviours \cite{chaplain2006mathematical,ben2000cooperative,yamaguchi2005cell,kennedy1974pheromone,ward1973chemotaxis}.
Different variations of the Keller-Segel model have been adopted in order to better understand the way cells aggregate \cite{bubba2020discrete,alt1980biased,othmer2002diffusion,stevens2000derivation,dolak2005keller}. {Chemotactic-driven formation of aggregates of GBM was also proposed in \cite{aubert2006cellular} to reproduce experimental density profiles of
GBM spheroids. Biomechanical mechanisms, chemotaxis and cell-cell interaction have not been extensively studied in GBM. Nonetheless, recent results have shown that intra-cellular contacts in GBM are crucial not only for migration and growth but also for resistance to therapy \cite{weil2017tumor}. However, the mechano-transduction signals behind these phenomena have not been documented \cite{broders2018mechanotransduction}. In this work we hypothesized that aggregation observed during GBM treatment could be linked to physical/biomechanical phenomena related to cell-cell interactions and, in particular, we focus on the possible role of cell mechanical properties. }

{ Our hypotheses are based on recent results linking GBM biomechanics and resistance to treatment. Foss et al. \cite{foss2020patient} provided evidence that GBM heterogeneity could also be associated with mechanical phenotypes, since the physical properties of tumour tissue strongly influence aspects of tumour progression including cell cycle regulation, migration and therapeutic resistance. 
In \cite{hohmann2020macc1}, it was suggested that the over-expression of Metastasis Associated in Colon Cancer 1 (MACC1), which promotes cell motility, proliferation and metastasis in various cancers, increases the elastic modulus and migration and reduces adhesion of GBM cells. This occurs through increased amounts of protrusive actin on laminin which prevents 3D aggregate formation.}

In this paper, we follow the chemotactic approach to explain the formation of glioma aggregates, and we suppose that GBM cells react through a mechano-transduction mechanism to the presence of a drug. 
Inspired by experimental observations in 3D/organoid cultures derived from freshly operated GBM, which reproduce \emph{in vivo} behaviours as described in \cite{oizel2017efficient} (see Section \ref{sec: experiments} for more details), we explore a simple setting where GBM aggregate formation is due to nutrient-limited cell proliferation coupled with a chemotaxis-based cell movement. We propose a novel mechano-transduction approach where cells are able to change their individual mechanical properties in contact with the drug, and we study the influence of these individual mechanical changes on the characteristics of the aggregates obtained at the population level. 

We explore two scenarios: the case where the treatment only acts on the mechanical properties of the cells, and the case where it also prevents cell proliferation as was experimentally observed in \cite{lee2016temozolomide}. We adopt a macroscopic approach where cells are represented by their macroscopic density and are supposed to move in the environment via chemotaxis, \emph{i.e.} towards zones of high concentrations of a chemoattractant that is produced by the tumour cells. Moreover, cell proliferation is assumed to depend on the local concentration of nutrients available. {We remark that in this work, oxygen is considered just as a nutrient like the others. In future models, it will be interesting to be able to single out its particular roles.} Finally, we suppose that when the treatment - represented by its continuous concentration - is introduced, it diffuses in the environment and is naturally consumed by the cells.

Under these hypotheses, we obtain a nonlinear volume-filling chemotaxis model for the cell density, coupled with reaction diffusion equations for the chemoattractant and treatment concentrations. Moreover, we provide a linear stability analysis that enables us to study the ability of the system to generate patterns, and we provide numerical simulations in 1D and 2D.

The paper is organised as follows. In Section \ref{sec: experiments} we describe the \emph{in vitro} experiments and the main experimental observations that motivated our model.  Section \ref{sec: mathematical model} is devoted to the description of the model for the first part of the experiments (without the treatment, Section \ref{sec: first part}) and the second part, when the treatment is introduced (Section \ref{sec: part 2}). In Section \ref{sec: linear stability}, we present the stability analysis for each of these two parts, and Section \ref{sec: numerical simulations} is devoted to numerical simulations. Section \ref{sec: 1D numerical result} presents the results in 1D including a discussion on the comparison between numerical experiments and theoretical predictions of the stability analysis. Section \ref{sec: numerical 2d} shows the 2D simulations. Finally, we discuss the results and give some perspectives in Section \ref{sec: discussion and perspectives}.

\section{Experimental results of TMZ effect on glioblastoma %Description of the experiments
}\label{sec: experiments}

To address the question of the response of GBM cells to TMZ treatment, we looked at recently developed 3D biosphere experiments, using GBM patient-derived cultures in a simple 3D scaffold composed of alginate and gelatin \cite{lisaunderreview}. GBMG5  cells were cultured at a concentration of $4\times 10^6$ cells/ml for $14$ days until the formation of cell aggregates could be observed, corresponding to the first part of the experiments, \textbf{P1}. Next, the cultures were treated with  $100\ \mu$M of TMZ for two hours once every week, which corresponds to the second part of the experiments, \textbf{P2}. Over the $30$ days, the proliferation was determined counting $3$ representative samples, and the cell number was determined as follows.  {The biospheres were dissociated by incubation for $3$ min in $100$ mM Na-Citrate and the cell number and cell viability were determined using the Countess optics and image automated cell counter (Life Technologies).}
In addition, the aggregates were photographed to analyse their morphology, and the diameter of the cell aggregates were measured from pictographs using FIJI. {To determine the diameters of these cell structures, more than 200 cell aggregates were measured.}

We show in Figure \ref{fig: data1} the mean length (in $\mu$m) of the cell aggregates computed from the microscopic images 
without TMZ treatment (round markers), and with TMZ weekly administered (squared markers).  Figures \ref{fig: data1} A) and B) show typical microscopic images of the spheroids at day 24, without and with weekly  TMZ treatment, respectively. In Figure \ref{fig: data2} we show the evolution of the cell number  as a function of time, {where we do not observe significant changes in cell number with and without TMZ, once the carrying capacity of the system is reached}.  {Based on these observations, we will suppose that TMZ acts as a non-cytotoxic drug, potentially inhibiting cell proliferation.} 

{Using clinically relevant concentrations of TMZ \cite{roos2007apoptosis}, the total number of cells in the biospheres does not seem to be significantly impacted by the TMZ treatment. However, the mean size of the GBMG5 cell aggregates decreases when TMZ is introduced weekly as compared to control cultures (Figure \ref{fig: data1}), suggesting that in the presence of the treatment, GBMG5 cells tend to self-organize into smaller and more compact cell clusters. }

\begin{figure}
    \centering
  \subfloat[]{\label{fig: data1}\includegraphics[scale=0.4]{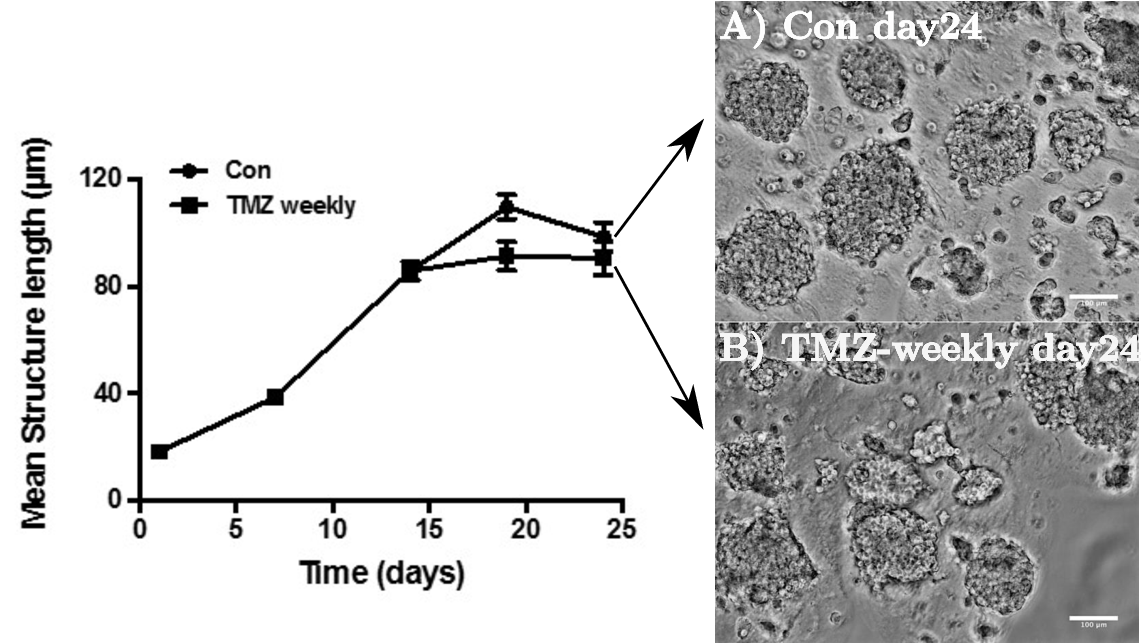}}
   \subfloat[]{\label{fig: data2}\includegraphics[scale=0.4]{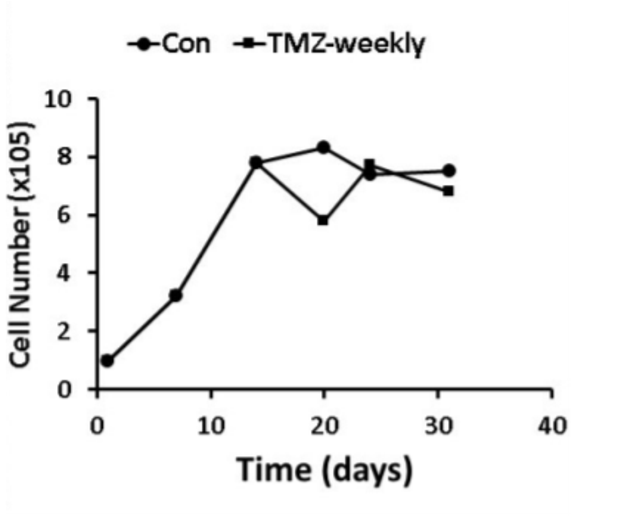}}

    \caption{Biological experiments of GBMG5 cells cultured in 3D scaffolds with and without {$100\mu$M} TMZ. (a) Evolution of the mean cell aggregates diameter determined from the microscopic images as function of time, without treatment (circle markers) and with a weekly TMZ (square markers). (A) Typical microscopic image on day 24 of control cell aggregates without treatment, (B) microscopic image at day 24 with  $100 \mu$M of TMZ administrated weekly for 2 hours. (b) Evolution of the total cell number in the biospheres as function of time, without treatment (circle markers) and with weekly TMZ (square markers). }
    \label{fig: fig1}
\end{figure}
\begin{figure}
    \centering
    \includegraphics[scale=0.55]{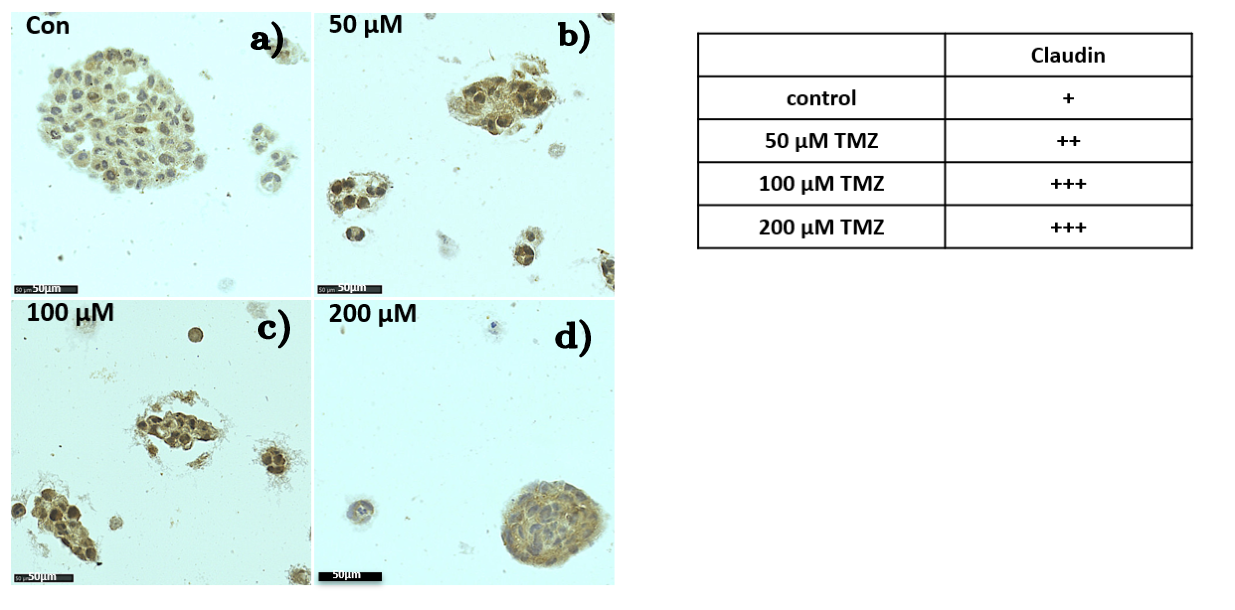}
    \caption{{Claudin expression marking the tight junctions between the cells.}}
    \label{fig: fig2}
\end{figure}

{Another observation supporting a tendency for GBM cells to form more compact structures with higher intracelluar adhesion under this type of treatment is the increased expression of claudin, a marker of tight junctions formed between cells. In Figure \ref{fig: fig2} we show GBMG5 cells that were cultured for 14 days until the formation of cell aggregates could be observed, and  the cultures were treated with different doses of TMZ until day 23. 
Furthermore, cell aggregates were fixed then labeled with an anti-claudin antibody. The degree of staining was determined in a double-blind experiment. From  Figure \ref{fig: fig2} (left), the cell aggregates seem to be smaller and more compact when TMZ is present, panel d), compared to the control group in a). These cultures are associated with higher levels of claudin (see Figure \ref{fig: fig2}, right). } 
\iffalse
\sout{\textcolor{blue}{TMZ has been shown to exhibit a large variety of actions including and beyond its original DNA alkylation properties \cite{oliver2020drug}. In \cite{ramalho2019biophysical} it is  suggested that the composition of membranes affects drugs partition and that drug biodistribution depends on the characteristics of the membrane including the packing of the lipid molecules. Thus membrane “packing” during aggregation is likely to influence the effects of TMZ. However, the concentration used in this study is sufficient to overcome this effect.  }}
\fi

Inspired by these observations and results presented in \cite{maurer2019loss,kim2014cd44,chen2018feedforward}, we propose a general model for the chemotaxis-driven formation of cell clusters and shrinking of the aggregates via the action of a non-cytotoxic drug. {We propose a novel approach based on a mechano-transduction mechanism, where GBM cells have the ability to change their mechanical properties in contact with the non-cytotoxic drug}. We also consider a second scenario where the treatment would also affect cell proliferation and compare the results obtained.

\section{Mathematical model}\label{sec: mathematical model}

Motivated by the experiments described in Section \ref{sec: experiments}, we assume that glioma cells have a chemotactic behaviour, \emph{i.e.} they move in response to some signaling chemical (chemoattractant), which is secreted by themselves and diffuses in the environment. The chemotactic movement of cell populations plays a fundamental role for example in gastrulation \cite{dormann2006chemotactic}, during embryonic development; it directs the movement of immune cells to sites of infection and it is crucial to understand tumour cell invasion \cite{condeelis2005great} and cancer development \cite{yamaguchi2005cell}. Motivated by these applications, chemotaxis and related phenomena have received
significant attention in the theoretical community, see the reviews \cite{hillen2009user,horstmann2004keller}.

%\textcolor{red}{[References of chemotactic models in glioblastoma.]}

We suppose that without treatment (part \textbf{P1} of the experiments described in Section \ref{sec: experiments}), {tumour} cells proliferate and move via chemotaxis as described before. We suppose that cell proliferation is limited by the nutrients available in the environment, \emph{i.e.} cell proliferation is only active as long as the local density does not exceed a given threshold corresponding to the carrying capacity of the environment. Moreover, in order to take into account the finite size of cells and volume limitations, cell motion is only allowed in locations where the local cell density is much smaller than another threshold value corresponding to the tight packing state. Without treatment, cells are supposed to behave as rigid bodies in the sense of \cite{painter2002volume}. In stressed conditions however, (\textbf{P2} of the experiments described in Section \ref{sec: experiments}) we suppose that cells respond to the chemotherapeutic stress, induced by the presence of {the treatment}, by changing their mechanical properties.

{These hypotheses are modelled via a system of partial differential equations (PDEs) which corresponds to a volume-filling chemotaxis equation \cite{painter2002volume}  for the first part (to describe the self-organisation of cells into aggregates), and an ``{semi-elastic}'' volume-filling chemotaxis approach \cite{wang2007classical} for the second part, when the treatment is introduced.}

%More specifically, the population of cancer cells is described by its} density $u(\mathbf{x},t)$ where $\mathbf{x}\in\Omega\subset\mathbb{R}^2$ and $\Omega$ is a bounded domain, $t\in[0,T]$.  
{For convenience, we denote the density of the population of cancer cells in \textbf{P1}  by $u(\mathbf{x},t)$ and in \textbf{P2} by $w(\mathbf{x},t)$. Here $\mathbf{x}\in\rm{\Omega}\subset\mathbb{R}^2$ where $\rm{\Omega}$ is a bounded domain. The time $t\in[0,T]$, where $T=T_1+T_2$ represents the total time corresponding to the first and second parts of the experiments.}  {The main difference between these two populations is the change in the mechanical properties of the cells due to the presence of the treatment. If the concentration of the treatment is zero, $u(\mathbf{x},t)=w(\mathbf{x},t)$.} Cells follow a biased random walk according to the distribution of the chemoattractant of concentration $c(\mathbf{x},t)$ that is secreted by the cells. We start by detailing the different components of the mathematical models corresponding to \textbf{P1} and \textbf{P2} described in Section \ref{sec: experiments}.\\
\vspace{0.2cm}

%\subsection{
\noindent\textbf{Logistic growth model for cell proliferation} { As previously described in the Introduction, in order to take into account the nutrient-limited population growth, cell proliferation is modelled by a logistic growth process. At the population level, we consider a source term $f(u)$ in the PDE for the evolution of the cell density, which depends nonlinearly on the local cell density $u$ and reads:}
\begin{equation}\label{eq: source term}
    f(u)=ru\Bigl(1-\frac{u}{u_\textnormal{max}} \Bigr)\ .
\end{equation}
Here, $r>0$ is the proliferation rate and $u_{\textnormal{max}}$ corresponds to the maximum density of the population, also referred to as the carrying capacity of the environment. Alternative cell kinetics could be considered. For example, we can assume that the proliferation is also mediated by the chemoattractant concentration such that $f(u,c)=ruc(1-u/u_{\textnormal{max}})$ \cite{painter2002volume}. {Here we only consider the case given by (\ref{eq: source term}).}\\
%\vspace{0.2cm}

\noindent\textbf{Chemoattractant dynamics}
 {We suppose that cell aggregates spontaneously emerge as the result of a self-organisation phenomenon of chemotaxis type. To this aim, we suppose that the cells themselves produce the signaling chemical (chemoattractant) that drives their motion. The chemical secreted is therefore supposed to be continuously produced by the cells at rate $\alpha>0$ and diffuses in the surrounding environment with diffusion coefficient $d_2>0$. It is further assumed that the chemical has a finite lifetime and degrades at rate $\beta>0$. The evolution of the chemical concentration $c(\textbf{x},t)$ is therefore given by the following reaction-diffusion equation}
\begin{equation}
\partial_tc=d_2\Delta c+\alpha u -\beta c\ ,
\end{equation}
\noindent {where $u$ is the cell density.} \\
%\vspace{0.2cm}

\noindent\textbf{Treatment dynamics}  We assume that the treatment is introduced at the beginning of \textbf{P2}. This treatment is supposed to diffuse in the environment with diffusion coefficient $d_4$, and to be consumed by the cells at rate $\delta$. This is modelled by a reaction-diffusion equation for the drug concentration $M(\textbf{x},t)$:
$$
\partial_t M=d_4\Delta M-\delta w\ ,
$$
where $w(\mathbf{x},t)$ represents the cell density corresponding to the second part of the experiments.
{Following \cite{baker1999absorption}, we suppose that the TMZ natural degradation rate \emph{in vitro} is very small compared to diffusion and absorption by the cells and neglect this term.}

We  consider different initial conditions for the drug: uniformly distributed in the simulation domain, introduced in the center as a very steep Gaussian function {or introduced through the boundary of the domain} (see Section \ref{sec: numerical simulations}).\\

\noindent\textbf{Cell mechanical properties}
{We introduce the so-called \emph{squeezing probability}, which describes the probability that a cell finds an empty space at a neighbouring location before moving {and it incorporates the cell-cell interactions \cite{wang2007classical}.}  The squeezing probability takes the form}
\begin{equation}
    q(u(\mathbf{x},t))=
    \begin{cases}
      1-\Bigl(\frac{u}{\bar{u}}\Bigr)^{\gamma(M)}, & \text{for}\ 0\leq u \leq\bar{u}\ , \\
      0, & \text{otherwise}\ ,
    \end{cases}\label{eq: squeezing probability}
  \end{equation}
   where $u(\mathbf{x},t)$ {is the cell density, $\gamma(M)\geq 1$ is the \emph{squeezing parameter}, $M \equiv M(\mathbf{x},t) \geq 0$ denotes the concentration of the treatment, and $\bar{u}$ is the \textit{crowding capacity} which corresponds to the tight packing state of the cells.}
 The function $q(u)$ satisfies the following properties,
\begin{equation}
    q(\bar{u})=0\ ,\ \ 0<q(u)\leq 1\ , \ \textnormal{and}\ \  q'(u)\leq 0\ \ \textnormal{for} \ \ 0\leq u<\bar{u}\ .
\end{equation}
Moreover, $|q'(u)|$ is bounded and $q''(u)\leq 0$.

The exponent $\gamma(M)$ is chosen to be
\begin{equation}
    \gamma(M)=(\bar{\gamma}-1)M+1 \ ,\label{eq: gamma M}
\end{equation} 
{where $\bar{\gamma}$ is a positive constant, $M=0$ when there is no drug in the environment (part \textbf{P1} of the experiments), and $M \equiv M(\mathbf{x},t)$  when the drug is introduced (part \textbf{P2}, described bellow). Such choice of $\gamma(M)$ enables to take into account different forms of the squeezing probability, corresponding to different mechanical behaviours of the cells. In Figure \ref{fig: sqeezing probability}, we plot the squeezing probabilities as a function of the cell density, corresponding to different values of $\gamma(M)$. We see that when there is no treatment present in the environment ($M=0$, $\gamma(M) = 1$, blue curve in Figure \ref{fig: sqeezing probability}), the squeezing probability decreases linearly with the local cell density, corresponding to cells modelled as solid particles.}

{However, for larger values of $\gamma(M)$ (when the treatment is present, $M>0$ and $\gamma(M)>1$, red and yellow curves in Figure \ref{fig: sqeezing probability}), the squeezing probability becomes a nonlinear function of the cell density, modelling the fact that in the presence of a drug, cells change their mechanical state to behave as {semi-elastic} entities that can squeeze into empty spaces. %\sout{\textcolor{blue}{In particular, $\gamma(M)\to \infty$ corresponds to the limit for cells being fluids, capable of filling all empty spaces. }} 
}

{We refer to \cite{wang2010chemotaxis,wang2007classical}  for more details on the link between the cells elastic properties and the squeezing probability. In particular, in \cite{wang2010chemotaxis}, the authors consider three different cell types leading to different squeezing probabilities:
\begin{itemize}
    \item If cells behave like a  fluid, they can fill all open spaces and cells interactions no longer have an impact on the squeeze probability, corresponding to the asymptotics $\gamma(M) \rightarrow \infty$.
    \item If cells are solid blocks, the squeezing probability is proportional to the number of occupants, \emph{i.e.} linearly dependent on the cell density, corresponding to the particular choice $\gamma(M) = 1$.
    \item If cells are elastic, they can adapt their configuration to squeeze into open spaces, leading to a nonlinear squeezing probability, piecewise higher than for solid blocks, corresponding to $\gamma(M) >1$. We will refer to this regime as semi-elastic.
\end{itemize}
}

\iffalse
\sout{\textcolor{blue}{\begin{remark}
Hereafter we are going to use the terminology \emph{semi-elastic}, in the sense of \cite{wang2010chemotaxis} and \cite{wang2007classical}, to refer to the elastic properties of the cells described by the squeezing probability $q(u(\mathbf{x},t))$.
\end{remark}}}
\fi

\begin{figure}
    \centering
    \includegraphics[scale=0.5]{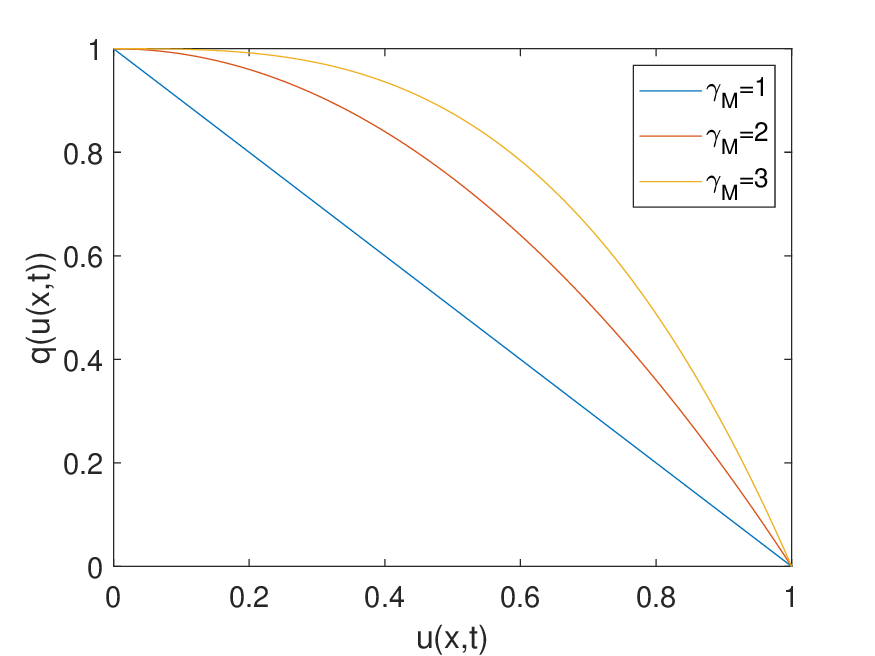}
    \caption{Squeezing probability for different values of $\gamma(M)$ for $\bar{u}=1$.}
    \label{fig: sqeezing probability}
\end{figure}

\subsection{Volume-filling approach for chemotaxis: first part \textbf{P1}}\label{sec: first part}

\noindent The classical Keller-Segel system of equations \cite{keller1971model} describes how cells {move along the gradient to local maxima of the chemoattractant.} At the same time, this chemical, which is produced by the cells, promotes aggregation leading to the so-called ``overcrowding scenarios'', and eventually, the cell density may blow up in finite time (see the comprehensive reviews \cite{hillen2009user,suzuki-chemotaxis-2018} and references therein).

{In this paper, in order to take into account volume limitations and the finite cell size, we consider a modified version of the Keller-Segel model called the \textit{volume-filling} approach for cell motion. This approach was introduced in \cite{painter2002volume}, where the authors provided a detailed derivation in one dimension as well as a comprehensive numerical study of the model.}

{For convenience of notation, for the case  without treatment \textbf{P1}, we define } 
\begin{equation}
    q_1(u(\mathbf{x},t))=1-\frac{u}{\bar{u}}\ \label{eq: q for part 1}. 
\end{equation}
{The Keller-Segel model built with this specific squeezing probability (\ref{eq: q for part 1}) has been widely studied in the literature,} from modelling \cite{painter2002volume,wang2010chemotaxis}, analytic \cite{wrzosek2010volume,han2017pattern,ma2012stationary,dolak2005keller} and numerical \cite{ibrahim2014efficacy} perspectives.\\
%For the case \textbf{P2} we consider the squeezing probability given in \eqref{eq: squeezing probability}. 

\noindent\textbf{Complete PDE system for the first part P1}
 {Taking into account all the previous ingredients, the complete PDE system for part \textbf{P1} (when no treatment is present in the environment) reads:}
\begin{equation}
\begin{aligned}
    \partial_t u & =\nabla\cdot(d_1D_1(u)\nabla u-\chi_u\phi_1(u)\nabla c)+f(u)\ ,\\ \partial_tc&=d_2{\Delta} c+\alpha u -\beta c\ ,\label{eq: formation of spheroids}
\end{aligned}
\end{equation}
{where the first equation describes the evolution of the cell density $u$ and the second is the reaction-diffusion equation for the chemoattractant, previously introduced. The equation for $u$ describes the volume-filling chemotactic motion associated with the squeezing probability $q_1(u)$. This equation has been obtained as the hydrodynamic limit of the continuous space-time biased random walk model that corresponds to the squeezing probability $q_1(u)$. In the hydrodynamic limit, the cell density evolves according to a nonlinear transport diffusion equation with source term, which corresponds to a volume filling Keller-Segel model with logistic growth. The density-dependent diffusion coefficient $D_1(u)$ and the chemotactic sensitivity $\phi_1(u)$ relate to $q_1(u)$ via
\begin{align*}
D_1(u) = q_1(u) - q_1'(u) u\ ,\ \ 
\phi_1(u) = q_1(u) u\ .
\end{align*}
{For \textbf{P1}, where $q_1(u)$ is given by (\ref{eq: q for part 1}), these coefficients take the values $D_1=1$ and $\phi_1(u)=u(1-u/\bar{u})$.}
In equation \eqref{eq: formation of spheroids},  $d_1, \chi_u, \alpha, \beta$ are all positive parameters and $f(u)$ is the logistic growth source term previously  defined in \eqref{eq: source term}.  
}

{The PDE system is supplemented with the following zero-flux boundary conditions}
\begin{equation}
\begin{aligned}
(d_1D_1(u)\nabla u-\chi_u\phi_1(u)\nabla c)\cdot\eta & = 0\ ,\ \ \
d_2\nabla c \cdot\eta =0\ ,\label{eq: boundary conditions}
\end{aligned}
\end{equation}
where $\eta$ is the outer unit normal at $\partial \rm{\Omega}$. The initial conditions are given by
\begin{equation}
\begin{aligned}
    u(\mathbf{x},0)&=u_0\ ,\ \ \
    c(\mathbf{x},0)=c_0\ .\label{eq: initial conditions}
\end{aligned}
\end{equation}

%\vspace{0.2cm}

\subsection{PDE system including the treatment: Part \textbf{P2}}\label{sec: part 2}

\noindent {We now describe the dynamics of the cell population when the drug is introduced (part \textbf{P2} of the experiments described in Section \ref{sec: experiments}). As previously mentioned in Section \ref{sec: experiments} and motivated by the observations in \cite{lisaunderreview}, where the treatment TMZ does not seem to induce cell death, we suppose that the drug only affects the mechanical properties of the cells. For a cell density $w(\mathbf{x},t)$ the squeezing probability of part \textbf{P2} is}
\begin{equation}
    q_2(w(\mathbf{x},t), M)=1-\Bigl( \frac{w}{\bar{u}}\Bigr)^{\gamma(M)}\ .\label{eq: q for part 2}
\end{equation}
{Note that we have supposed that the crowding capacity $\bar{u}$, which corresponds to the tight packing state, remains unchanged from \textbf{P1} to \textbf{P2}. This corresponds to the hypothesis that the treatment does not modify the volume of the cells but only changes their elastic properties.}

The complete PDE system corresponding to the second part of the experiments therefore reads:
\begin{equation}
\begin{aligned}
    \partial_t w&=\nabla\cdot(d_3{D}_2(w,M)\nabla w-\chi_w{\phi}_2(w,M)\nabla c)+f(w)\ ,\\
    \partial_t c&=d_2\Delta c+\alpha w-\beta c\ ,\\
    \partial_t M&=d_4\Delta M-\delta w\ .\label{eq: second pahse}
\end{aligned}
\end{equation}
{Here, $f(w)$ is again the logistic growth source term given by  \eqref{eq: source term}, and the first equation has been derived using the squeezing probability $q_2(u)$ defined by \eqref{eq: q for part 2}. Note that the carrying capacity $u_{\textnormal{max}}$ remains unchanged in the two parts of the experiments: we have supposed here that the drug does not interact with the nutrients.} Analogous to (\ref{eq: formation of spheroids}), the movement of the cells is described by a chemotactic system where, in this case, the diffusion and chemosensitive coefficients depend also on the concentration of the treatment.  These modified coefficients will lead to changes in the size of the aggregates as shown in the numerical experiments in Section \ref{sec: numerical simulations}. The evolution of the concentration $c$ is the same as in (\ref{eq: formation of spheroids}) where, in this case, the chemoattractant is produced by the new population $w$. Including proliferation also in this second part allows us to assess the effect of the treatment in the population at earlier times, while the population of cells is still growing and aggregates are still forming.

{Different initial conditions for the cell density and concentration of the treatment will be considered as described in Section \ref{sec: numerical simulations}. Each initial condition for \textbf{P2} will correspond to a density profile solution of the system \textbf{P1} at a given time, \emph{i.e.} $w(\mathbf{x},0)=u(\mathbf{x},T_1)$ for some given $T_1$. We will consider cases where the treatment is introduced on already formed and stable aggregates (steady state of  \eqref{eq: final system 1}), but also cases where it is introduced at earlier times (during the formation of the aggregates, see Section \ref{sec: numerical simulations}). The initial condition for the drug concentration is considered to be either homogeneously distributed in the simulation domain, introduced in the center {or through the boundary}.}

\begin{remark}
In both parts of the experiments, we assume that the crowding capacity $\bar{u}$ is larger than the carrying capacity $u_\textnormal{max}$. 
{From a biological viewpoint, this hypothesis amounts to consider that the maximal local number of cells that can be supplied with enough nutrients/oxygen to survive, $u_{\max}$, is smaller than the purely mechanical density which corresponds to the optimal packing state of the cells. In a nutrient-unlimited environment cells would aggregate until maximally packed, while we suppose here that there is a limited amount of nutrients/oxygen in the environment which is not sufficient to support all the cells in a tight-packing state. }
\end{remark}

\section{Linear stability analysis and pattern formation}\label{sec: linear stability}
{The system  \eqref{eq: formation of spheroids} without source term is well known in the literature as the volume-filling Keller-Segel model, for which emergence of patterns has been characterised and is now well documented.} Pattern formation refers to the phenomenon by which, after varying a bifurcation parameter, the spatially homogeneous steady state loses stability and inhomogeneous solutions appear. In the following, we investigate in which parameter region we can expect instability of homogeneous solutions,  {corresponding to the formation of patterns}.
The linear stability analysis followed here {is classical and follows the lines of} \cite{painter2002volume,wang2007classical,murray2001mathematical}.

We first recall the two systems associated with the dynamics described in \textbf{P1} and \textbf{P2}. 
{Using the fact that in \textbf{P1} the squeezing probability is chosen to be $q_1(u)=1-\frac{u}{\bar{u}}$, we have}
\begin{equation}
\begin{aligned}
\begin{cases}
    \quad \partial_t u&=\nabla\cdot(d_1 \nabla u-\chi_u\phi_1(u)\nabla c)+ru\Bigl(1-\frac{u}{u_{\textnormal{max}}} \Bigr)\ ,\\ \quad \partial_t c &=d_2\Delta c+\alpha u-\beta c\ ,\label{eq: chemotaxis detail}
    \end{cases}
\end{aligned}
\end{equation}
where 
\begin{equation}
D_1=1\ ,\ \  \textnormal{and}\ \ \phi_1(u)=u \big(1-\frac{u}{\bar{u}} \big)\ .\label{eq: new diffusion and chemot coefficients} 
\end{equation}
For part \textbf{P2},  when the squeezing probability function is given by (\ref{eq: q for part 2}), the system writes
\begin{equation}
\begin{aligned}
\begin{cases}
    \quad \partial_t w&=\nabla\cdot(d_3{D}_2(w,M)\nabla w-\chi_w{\phi}_2(w,M)\nabla c)+\tilde{r}w\Bigl(1-\frac{w}{u_{\textnormal{max}}} \Bigr)\ ,\\
    \quad \partial_t c&=d_2\Delta c+\alpha w-\beta c\ ,\\
    \quad \partial_t M&=d_4\Delta M-\delta w\ ,
    \end{cases}
\end{aligned}
\end{equation}
with diffusion and chemotactic coefficients given by
\begin{equation}
{D}_2(w,M)=1+(\gamma(M)-1)\Bigl( \frac{w}{\bar{u}}\Bigr)^{\gamma(M)} \ \ \textnormal{and}\ \   {\phi}_2(w,M)=w\Bigl(1-\Bigl(\frac{w}{\bar{u}} \Bigr)^{\gamma(M)}\Bigr)\ .
\end{equation}

%\textcolor{red}{Say something about global existence of solutions.}

%\textcolor{red}{Say something about traveling waves}

\subsection{Dimensionless model}\label{sec: dimensionless model}
{ To get a deeper insight of the system's behaviour we introduce the characteristic values of the physical quantities appearing in the models. Denoting by $X$ and $T$ the {macroscopic} units of space and time, respectively, such that}
$
\bar{\mathbf{x}}=\frac{\mathbf{x}}{X},\ \ \bar{t}=\frac{t}{T}\ ,
$ then we choose
\[ 
(\bar{\mathbf{x}},\bar{t})=\left(\sqrt{\frac{\beta}{d_2}}\mathbf{x},\ \frac{\beta d_1}{d_2}t \right)\ .
\]
{Using these new variables, the dimensionless system for part \textbf{P1} writes}
\begin{equation}
\begin{aligned}
\begin{cases}
\quad \partial_tu&=\nabla \cdot(\nabla u-A\phi_1(u)\nabla c)+r_0u\left(1-\frac{u}{u_{\textnormal{max}}}\right)\ ,\\
\quad \zeta\partial_t c&=\Delta c+u-c\ .\label{eq: final system 1}
\end{cases}
\end{aligned}
\end{equation}
{Similarly, we obtain for \textbf{P2}}
\begin{equation}
    \begin{aligned}
    \begin{cases}
\quad \theta\partial_tw&=\nabla\cdot({D}_2(w,M)\nabla w-B{\phi}_2(w,M)\nabla c)+\tilde{r}_0w\Bigl(1-\frac{w}{u_{\textnormal{max}}} \Bigr)\ ,\\
\quad \zeta\partial_t c&=\Delta c+w-c\ ,\\  \quad m\partial_tM& =\Delta M-\delta_0 w\ ,\label{eq: nondimensional second phase}
\end{cases}
\end{aligned}
\end{equation}
where
\begin{equation}
\begin{aligned}
    A=\frac{\chi_u}{d_1}\ ,\  r_0=\frac{d_2r}{d_1\beta}\ ,\  \zeta=\frac{d_1}{d_2}\ ,\  \theta=\frac{d_1}{d_3}\ ,\\  B=\frac{\chi_w}{d_3}\ ,\ \tilde{r}_0=\frac{d_2\tilde{r}}{d_3\beta}\ , \  m=\frac{d_3}{d_4}\ , \delta_0=\frac{\delta}{d_1} \ .\label{eq: nondimensional parameters}
\end{aligned}
\end{equation}

The parameters $\zeta$ and $m$ are assumed to be small since the chemoattractant and the chemotherapeutic treatment diffuse faster than the cells. On the other hand, $\theta\simeq 1$ since both population densities $u$ and $w$ are assumed to diffuse at similar rates.
{In the following, we state the linear stability for both systems in separate sections.}

\subsection{First part: Formation of the aggregates}\label{sec: first phase}
We first consider the system (\ref{eq: final system 1}), which can be re-written in a more general form as 
\begin{equation}
\begin{aligned}
\partial_t u&=\nabla\cdot(\nabla u -A\phi_1(u)\nabla c)+f(u)\ \label{eq: syst linear stability1},\\
\zeta\partial_t c  &= \Delta c+g(u,c)\ ,
\end{aligned}
\end{equation}
where $\phi_1(u)$ is given in (\ref{eq: new diffusion and chemot coefficients}), $f(u)=r_0u(1-\frac{u}{u_{\textnormal{max}}})$ and $g(u,c)= u- c$. This system is subject to uniformly distributed initial conditions and zero-flux boundary conditions as in (\ref{eq: boundary conditions}).

The main result in this section is the following theorem, which gives the  pattern formation conditions for the system (\ref{eq: syst linear stability1}). 

\begin{theorem}
Consider $(u^*,\ c^*) = (u_{\max},u_{\max})$ {the} spatially homogeneous steady state. Then pattern formation for the system (\ref{eq: syst linear stability1}) with zero flux boundary conditions (\ref{eq: boundary conditions}) is observed if the following conditions are satisfied,
\begin{equation}
    \begin{aligned}
    f_u^*+\zeta^{-1}g_c^*&<0\ ,\  f_u^*g_c^*>0\ ,\  \zeta^{-1}g^*_c+f^*_u-\zeta^{-1}g^*_uA\phi_1(u^*)>0\ , \\ &g_c^*+f_u^*+g_u^*A\phi_1(u^*)>2\sqrt{f_u^*g_c^*}\ .
    \end{aligned}
\end{equation}
The critical chemosensitivity is given by \begin{equation}
    A^c=\frac{2\sqrt{r_0}+ 1+r_0}{ u_{\textnormal{max}}\left(1-\frac{u_{\textnormal{max}}}{\bar{u}} \right)}\ ,
\end{equation}
and for $A>A^c$ patterns can be expected. The wavemodes $k^2$ are in the interval defined by \begin{align}
    k_1^2=&\frac{-m-\sqrt{m^2-4 f_u^*g_c^*}}{2}<k^2<k^2_2=\frac{-m+\sqrt{m^2-4 f_u^*g_c^* }}{2}\ ,
\end{align}
where $m=-(g_c^*+f_u^*+g_u^*A\phi_1(u^*))$.
\end{theorem}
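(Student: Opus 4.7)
The plan is to carry out a standard Turing-type linear stability analysis, closely paralleling the approach of \cite{painter2002volume,wang2007classical,murray2001mathematical}. First I would identify the spatially homogeneous steady state $(u^*,c^*)$ by solving $f(u^*)=0$ and $g(u^*,c^*)=u^*-c^*=0$, yielding the relevant equilibrium $u^*=c^*=u_{\textnormal{max}}$. I would then linearise (\ref{eq: syst linear stability1}) around $(u^*,c^*)$: writing $u=u^*+\tilde u$, $c=c^*+\tilde c$ and using that $u^*$ is constant (so the term $\phi_1'(u^*)\nabla u^*\cdot\nabla\tilde c$ coming from expanding the chemotactic flux vanishes), the linearised system reduces to
\begin{equation*}
\partial_t \tilde u = \Delta \tilde u - A\phi_1(u^*)\,\Delta \tilde c + f_u^*\,\tilde u,
\qquad
\zeta\,\partial_t\tilde c = \Delta\tilde c + g_u^*\,\tilde u + g_c^*\,\tilde c.
\end{equation*}

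Next I would insert the Fourier ansatz $(\tilde u,\tilde c)=(\hat u_k,\hat c_k)\exp(\lambda t + i\,k\cdot x)$, compatible with the zero-flux boundary conditions (\ref{eq: boundary conditions}), to reduce the problem to a $2\times2$ linear eigenvalue problem whose characteristic polynomial reads $\lambda^2 - T(k^2)\,\lambda + \zeta^{-1}\,h(k^2) = 0$, where
\begin{equation*}
T(k^2) = f_u^* + \zeta^{-1}g_c^* - (1+\zeta^{-1})k^2,
\qquad
h(k^2) = k^4 - \bigl(f_u^* + g_c^* + A\phi_1(u^*)\,g_u^*\bigr)\,k^2 + f_u^*\,g_c^*.
\end{equation*}
Stability of the homogeneous state in the absence of diffusion and chemotaxis (at $k=0$) forces $T(0)<0$ and $h(0)>0$: these are the first two inequalities of the theorem. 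Since $T(k^2)$ is strictly decreasing in $k^2$, the trace stays negative for every $k^2>0$, so a growing mode can only arise from $h(k^2)<0$ for some $k^2>0$.

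Viewing $h$ as a quadratic in $y=k^2$ with $h(0)=f_u^*g_c^*>0$, it takes negative values on a nonempty interval $(k_1^2,k_2^2)\subset(0,\infty)$ precisely when its discriminant is positive \emph{and} its vertex lies at $k^2>0$; these two requirements translate into the third and fourth inequalities of the theorem. The unstable band is then the open interval between the two real roots of $h$, which, after setting $m=-(g_c^*+f_u^*+g_u^*A\phi_1(u^*))$, are exactly $k_{1,2}^2 = \tfrac12\bigl(-m \pm \sqrt{m^2 - 4 f_u^* g_c^*}\bigr)$. The critical chemosensitivity $A^c$ is the threshold at which the discriminant of $h$ first vanishes while the vertex is still at positive $k^2$, so the two roots coalesce; solving $(f_u^* + g_c^* + A^c\phi_1(u^*)g_u^*)^2 = 4\,f_u^*\,g_c^*$ for the positive root yields $A^c\phi_1(u^*)g_u^* = 2\sqrt{f_u^*g_c^*} - (f_u^* + g_c^*)$, and substituting $f_u^*=-r_0$, $g_c^*=-1$, $g_u^*=1$, and $\phi_1(u^*)=u_{\textnormal{max}}(1-u_{\textnormal{max}}/\bar u)$ gives $A^c\phi_1(u^*) = (1+\sqrt{r_0})^2 = 1 + 2\sqrt{r_0} + r_0$, which rearranges to the stated formula. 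The main bookkeeping challenge is simply to track the various $\zeta^{-1}$ factors and signs when computing the trace and determinant of the perturbed Jacobian; beyond this, no technical obstacle arises.
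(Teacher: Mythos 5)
Your proposal is correct and follows essentially the same route as the paper's proof in Appendix A: linearisation about $(u_{\max},u_{\max})$, expansion in Neumann Laplacian eigenmodes, stability at $k^2=0$ giving the first two conditions, and instability via the determinant/quadratic $h(k^2)=\zeta b(k^2)$ becoming negative, with the critical chemosensitivity obtained from the vanishing discriminant and the unstable band from the roots of $h$. The only cosmetic differences are your use of a plane-wave ansatz in place of the paper's explicit eigenfunction expansion and your phrasing of the instability criterion as ``discriminant positive and vertex at positive $k^2$'' rather than ``negative minimum of $b$,'' which are equivalent.
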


\begin{proof}
See Appendix \ref{app: stability analysis} for the proof of this theorem.
\end{proof}

%\sout{We start by assuming that in the absence of spatial variations, $u$ and $c$ tend to a linearly stable homogeneous steady state. To find the conditions for this to hold we consider the space-homogeneous system}

%The results of this section are summarized in the following theorem

%\textcolor{blue}{I think it would be better presented if the theorem was at the begining of the section and the section in itself is the proof. I did not make these changes yet, maybe we can talk about that at our next meeting (and it is a section François and Lisa will not necessarily need for a first reading). If so, the stability results should be written in the theorem (I mean not referred to as formula numbers). I think it would be clearer. Same would apply for section 4.3 } \textcolor{magenta}{Agreed!}\textcolor{blue}{yes that's perfect like that, thanks!}

\subsection{Second part: Treatment}\label{sec: treatment}
%\iffalse
We now consider the system (\ref{eq: nondimensional second phase}) which corresponds to \textbf{P2}, when the treatment is introduced. The parameter range where patterns are observed is summarised in the following theorem.

\begin{theorem}
Consider {${(w^*,\ c^*,\ M^*)=(u_\textnormal{max},\ u_\textnormal{max},\ M_s)}$, where  \newline$ M_s=|\rm{\Omega}|^{-1}\int_{\rm{\Omega}} M(\mathbf{x},0)\diff\mathbf{x}$}, a spatially homogeneous steady state. Also, consider (\ref{eq: nondimensional second phase}) with zero flux boundary conditions given by
\begin{align*}
    (d_3 D_2(w,M)\nabla w-\chi_w\phi_2(w,M)\nabla c)\cdot\eta & = 0\ ,\ \ \
    d_2\nabla c\cdot\eta = 0\ ,\ \ \
    d_4\nabla M\cdot\eta = 0. 
\end{align*}
Then, the critical chemosensitivity is given by 
\begin{equation*}
    B^c=\frac{2\sqrt{\bar{r}_0D_2(u_\textnormal{max},M_s)}+D_2(u_\textnormal{max},M_s)+\tilde{r}_0}{u_\textnormal{max}\Bigl(1-\Bigl(\frac{u_\textnormal{max}}{\bar{u}} \Bigr)^{\gamma_{M_s}} \Bigr)}\ ,
\end{equation*}
 where 
 \begin{equation*}
 D_2(u_\textnormal{max},M_s)=1+(\gamma_{M_s}-1)\Bigl(\frac{u_\textnormal{max}}{\bar{u}} \Bigr)^{\gamma_{M_s}}\ .
 \end{equation*}
 Patterns can be expected if $B>B^c$ and the wavemodes $k^2$ are in the interval defined by
 \begin{align*}
    k_1^2=&\frac{-\bar{m}-\sqrt{\bar{m}^2-4D_2(u_\textnormal{max},M_s)(f_w^*g_c^*)}}{2D_2(u_\textnormal{max},M_s)}<k^2<k^2_2\\ & \ \ \ \ \ \ \ \ \ \  =\frac{-\bar{m}+\sqrt{\bar{m}^2-4D_2(u_\textnormal{max},M_s)(f_w^*g_c^*)}}{2D_2(u_\textnormal{max},M_s)}\ ,
\end{align*}
for $\bar{m}=-(D_2(u_\textnormal{max},M_s)g_c^*+g_w^*B\phi_2(u_\textnormal{max},M_s)+f^*_w)$.
\end{theorem}

\begin{proof}
The proof of this theorem can also be found in  Appendix \ref{app: stability analysis}.
\end{proof}

\begin{remark}
For the case of $2$ dimensions, we can rewrite the systems (\ref{eq: final system 1}) and (\ref{eq: nondimensional second phase}) using  polar coordinates $(\rho,\theta)$ where we use the transformation $x=\rho\sin\theta$, $y=\rho\cos\theta$ and the Laplace operator  is now given by $\Delta_p=\frac{1}{R}\frac{\partial}{\partial \rho}\left(\rho\frac{\partial}{\partial \rho}\right)+\rho^2\frac{\partial^2}{\partial\theta^2}$, where $R$ is the radius of the domain. The eigenvalue problem (\ref{eq: psi}) is now written as $-\Delta_p\psi_k=k^2\psi_k$ with boundary conditions $\partial\psi_k/\partial \rho=0$ at $\rho=R$. 
The eigenfunctions are obtained  by separation of variables and are given by $\psi_k(x,y)=\mathcal{R}(\rho)\Theta(\theta)$. Here  $\Theta(\theta)=e^{is\theta}=A\cos(s\theta)+B\sin(s\theta)$ for some $s\in\mathbb{Z}$. The radial part $\mathcal{R}(\rho)$ is given in terms of Bessel functions $\mathcal{R}(\rho)=\mathcal{J}_s(k\rho)$ (see \cite{sarfaraz2018domain}) where $k=\frac{c_{s,p}}{R}$ and $c_{s,p}$ denotes the $p$th zero derivative of $\mathcal{J}_s$, which is a first kind Bessel function of order $m$. Finally we can write
$\psi_k^{s,p}(\rho,\theta)=\mathcal{J}\left( \frac{c_{s,p}}{{R}}\rho\right)(A\cos(s\theta)+B\sin(s\theta))\ .$ 
\end{remark}

{The stability analysis reveals that several competing effects control the system's ability to create patterns (aggregates). The criteria obtained both in \textbf{P1} or \textbf{P2} show that the chemotactic sensitivity must be large enough to compensate the smoothing effect of the diffusion term and of the logistic growth. On the other hand, one can observe from the bifurcation formulae that the ratio $\frac{u_{\max}}{\bar{u}}$ (carrying capacity \emph{vs.} density of the tight packing state) plays an important role in the emergence of patterns: larger values lead to more aggregated states. These results show that the logistic growth term has an intrinsic smoothing property, \emph{i.e.}, it tends to force the density to equate the carrying capacity, while the chemotactic term acts as an attractive force and creates zones of higher density (recall that $u_{\max}<\bar{u}$). The aggregates are an expression of a balance in between these two competing effects, which are completely characterised by the stability criterion.  }

\begin{figure}[tbhp]
    \centering
  \subfloat[]{\label{fig: wavenumber}\includegraphics[scale=0.4]{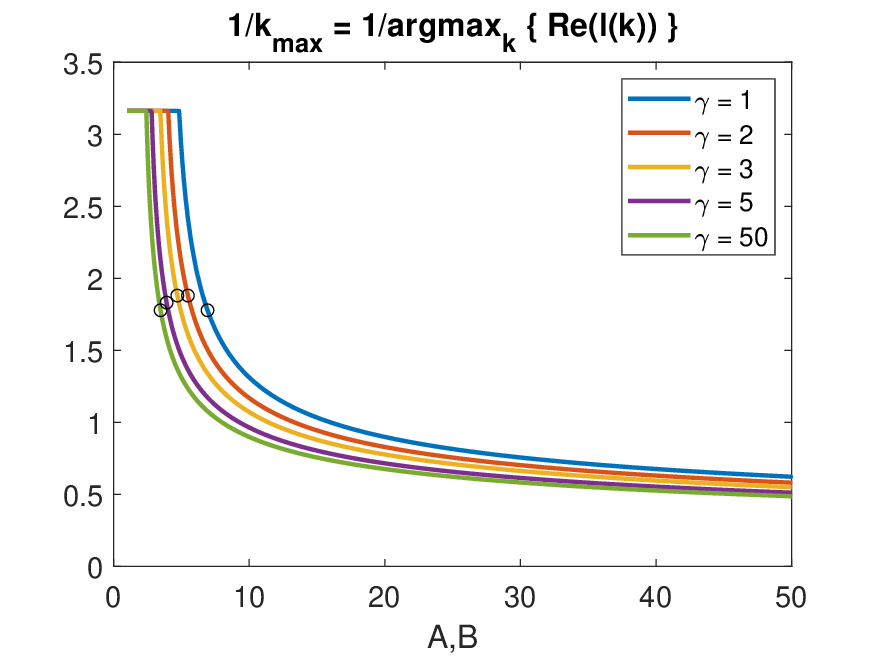}}
   \subfloat[]{\label{fig: wavenumber 2D}\includegraphics[scale=0.4]{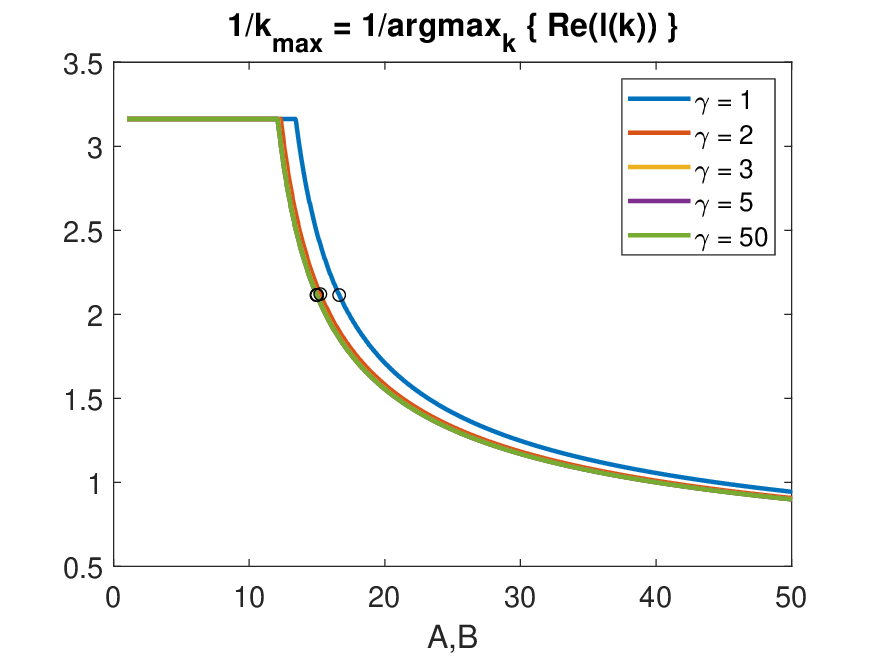}}
    \caption{Wavenumbers for different values of $\gamma(M)$ when (a) $r_0=0.1$, $u_{\max} = u_0=0.5$; and for (b) $r_0=0.05$, $u_{\max} = u_0=0.1$ . Circles indicate $k_c$ values.}
    \label{fig: maximum eigenvalues and modes}
\end{figure}
{In order to give more insights about the size of the emerging patterns, we show in Figure~\ref{fig: maximum eigenvalues and modes} {the values of the inverse maximal wavenumber} as function of the chemotactic sensitivity (denoted by $A$ for part \textbf{P1} and $B$ for part \textbf{P2}), for different values of the exponent $\gamma(M)$. We recall that $\gamma(M) = 1$ in \textbf{P1}, where cells act as rigid bodies, and $\gamma(M)>1$ in the presence of the treatment, where cells behave as {semi-elastic entities} (\textbf{P2}). Figure \ref{fig: wavenumber} shows the results for growth rate $r_0=0.1 = 0.8 \; \rm{day}^{-1}$ and carrying capacity $u_{\max} = 0.5$, Figure \ref{fig: wavenumber 2D} for  $r_0=0.05 = 0.4 \; \rm{day}^{-1}$ (slower growth) and $u_{\max} = 0.1$ (lower carrying capacity). In both figures, the black circles indicate the critical values for the chemotactic sensitivity above which the system is unstable. Here, the tight packing density is set to $\bar{u} = 1$. The maximal wavenumber corresponds to the most unstable mode, \emph{i.e.} the perturbed wave that will grow the fastest. Therefore, the inverse of this maximal wavenumber is directly related to the size of the emerging patterns. 

As one can observe, an increase in the chemotactic sensitivity parameter correlates with a decrease in the observed pattern size, suggesting that the aggregates are smaller: larger chemotactic sensitivity leads to more aggregated clusters. Moreover, the aggregate size also decreases as cells pass from rigid bodies to {semi-elastic} entities (when $\gamma(M)$ increases). This is due to the fact that for larger values of $\gamma(M)$, cells are more easily deformed and can aggregate more efficiently than when they behave as rigid spheres. }

{When we increase the ratio $\frac{u_{\max}}{\bar{u}}$ (compare Figure \ref{fig: wavenumber} and \ref{fig: wavenumber 2D}), we observe that the critical value of the chemosensitivity above which patterns are generated is larger than for smaller ratios $\frac{u_{\max}}{\bar{u}}$. These results highlight once again the smoothing effect of the logistic growth: when the cell tight packing density is unchanged, decreasing the carrying capacity of the environment enhances cell death in the aggregates formed by chemotaxis, where cells  try to reach the tight packing state. In this case, the critical chemosensitivity value must be large enough to compensate for the cell death induced by the logistic growth. Moreover, we observe that larger ratios $\frac{u_{\max}}{\bar{u}}$ induce less influence of the function $\gamma(M)$. The cell aggregation abilities are mainly driven by the chemosensitivity intensity and not so much by the cell mechanical properties for large values of $\frac{u_{\max}}{\bar{u}}$.}
\section{Numerical simulations}\label{sec: numerical simulations}
In addition to the analytic results obtained in Section \ref{sec: linear stability}, we present numerical simulations for the two problems \eqref{eq: final system 1} and \eqref{eq: nondimensional second phase}. This allows us to investigate {the behaviour of the  model's solutions} for different scenarios and range of parameters. It is well-known that a standard discretisation of the Keller-Segel models can lead to nonphysical solutions due to the convective term.  
Here, we focus on a numerical method that preserves the non-negativity of the cell density using the upwind finite element method described in \cite{bubba2019positivity} for the simulation of the Cahn-Hilliard equation. 

The calculation of the chemotactic coefficient follows the lines of \cite{almeida_energy_2019}. Indeed, the finite volume scheme proposed in \cite{almeida_energy_2019} is identical to the numerical method presented in this paper in dimension one.   
{However, in higher dimensions, and since we also use a finite element method, the numerical scheme presented in \cite{almeida_energy_2019} differs from the one in this section, as detailed in Appendix \ref{app: numerics description}.  }

 \subsection{Biological relevance of the model parameters}
{
 Here, we comment on the choice of the model parameters that we will use for the numerical simulations and how they relate to experimental known data. As hypoxia-inducible factors (HIF) are supposed to be responsible for the chemotaxis motion of GBM cells, we suppose that the diffusion coefficient $d_2$ and consumption rate $\beta$ for the chemoattractant are linked to biological measurements of the oxygen diffusion in human brain which were estimated in \cite{Toma2013,Frieboes2007} to  $d_2 = 86.4\ \textnormal{mm}^2\ \textnormal{day}^{-1}$ and $\beta = 8640\ \textnormal{day}^{-1}$. {For such values, and using the scaling of Section \ref{sec: dimensionless model}, one unit of time of our model corresponds to $0.125$ days, and one unit of space is $10^{-1}$ mm.} In \cite{Frieboes2007,Columbo2015}, the proliferation rate for well oxygenated glioma cells \emph{in vitro} $r$, was shown to lie between $0.5$ and $1$ day$^{-1}$. As the proliferation rate relies significantly on the nutrient, also smaller value seems to be biologically admissible in real conditions and following \cite{Columbo2015}, we choose $r=0.4\ \textnormal{day}^{-1}$ and $r = 0.8\ \textnormal{day}^{-1}$ (corresponding to the non-dimensionalised parameter $r_0 = 0.05$ and $r_0=0.1$).  %However, we chose to consider lower values $d_2 = 8.64\ \textnormal{mm}^2\ \textnormal{day}^{-1}$ and $\beta = 864\ \textnormal{day}^{-1}$ owing to the large variability of this parameter according to the type of tissue (see \cite{Tang2014}).  
 {These values are in agreement with the non-dimensional parameters in \eqref{eq: nondimensional parameters}.} 
 
 As we found no experimental data on the chemotactic coefficient $\chi_u$ of glioma cells in response to chemoattractant concentration, the choice of this parameter is driven by the stability analysis and we find that the interesting regimes are obtained for a dimensionless chemosensitivity in between $7$ and $70$, corresponding to a chemotactic coefficient $\chi_u \in [0.6, 6]\ \textnormal{mm}^2\  \textnormal{day}^{-1}$. Moreover, as no measurements for glioma cells' diffusion coefficient are available in the literature, the parameter $d_1$ is arbitrarily chosen to be $100$ times smaller than the chemoattratant diffusion speed and we choose $d_1 \approx d_3 \approx 0.086\ \textnormal{mm}^2\ \textnormal{day}^{-1}$, \emph{i.e} the non-dimensionalised parameter $\zeta = \frac{d_1}{d_2} = 0.01$.    }

\subsection{Numerical results for a one dimensional case}\label{sec: 1D numerical result}

For all numerical computations we choose the packing capacity $\bar{u}=1$. We consider different proliferation rates $r_0=0.1,\ 0.05$ and different initial conditions and carrying capacities $u_{\max} = u_0 =0.1,\ 0.5$.
 The nondimensional parameters given in (\ref{eq: nondimensional parameters}) are $\zeta=m=0.01$ and $\theta=1$ since we assume that the chemoattractant $c$ and the treatment diffuse much faster than the cells, while the motility of the cells is not affected by the treatment, so $d_1\approx d_3$. 
The initial condition for the cell density $u_0$ is assumed to be randomly distributed in space. We similarly define the initial chemoattractant concentration $c_0$.  

In this section we start by solving the systems (\ref{eq: final system 1}) and (\ref{eq: nondimensional second phase}) on the interval $[0,L]$ with homogeneous non-flux boundary conditions using the method described in Appendix \ref{app: numerics description}. In Appendix \ref{app: one dimension numerics} we investigate the effect of the size of the domain  as well as the effect of the parameters $A$ and $B$ on the formation and evolution of patterns. Moreover, using (\ref{eq: nondimensional second phase}) we study the effect of the treatment using the solution of (\ref{eq: final system 1}) at the final time $T_1$ as initial condition. We explore the case when we introduce the treatment at earlier stages of the formation of the aggregates. 

We consider two different scenarios for the evolution of the concentration of the treatment. First, we assume that the treatment diffuses very fast in the whole domain so that the concentration is homogeneous from time $T_2=0$. The other two cases we consider, which are closer to real experiments, start with a high concentration of the drug, either in the centre or at the boundary of the domain, and this concentration diffuses over time according to the third equation in (\ref{eq: nondimensional second phase}).\\

%All the simulations for P1 are run for a time $T=200$ while for P2 we have $T=100$.
%\vspace{0.2cm}

\noindent\textbf{Comparison with the linear stability analysis} In order to quantify the aggregate sizes and compare it to the ones predicted by the stability analysis, we use the Fourier transform of the numerical solution and extract the frequency that corresponds to the maximal Fourier mode. For the sake of this analysis, periodic boundary conditions are therefore considered. To this aim, we compute the discrete Fourier transform $\mathcal{F}[u](\mathbf{x},t)=\hat{u}(\lambda,t)$  and define
$$k_\textnormal{max} = \argmax_\lambda(|\hat{u}(\lambda)|) =\argmax_\lambda \Bigl(\sqrt{\textnormal{Re}(\hat{u}(\lambda))^2+\textnormal{Im}(\hat{u}(\lambda))^2 }\Bigr)\ ,$$
{which corresponds to the frequency of the largest Fourier mode. The inverse ${(k_{\max})}^{-1}$ of this maximal frequency relates to the pattern size. This maximal frequency of the Fourier transform of the solution is expected to correspond to the maximal wavenumber predicted by the stability analysis. We show in Figure \ref{fig: comparison analytic vs numeric} the values of ${(k_{\max})}^{-1}$ computed from the numerical solution (blue dotted line) compared to the predictions of the stability analysis (red curve), as function of the chemosensitivity, for $\gamma(M) = 1$ (Figure \ref{fig: comparison A}) and $\gamma(M) = 5$ (Figure \ref{fig: comparison B}). As one can observe, we obtain a very good agreement between the numerical values and the predictions of the stability analysis, and we recover the critical value of the chemosensitivity parameter above which the system generates patterns.  }\\

\begin{figure}[tbhp]
\centering
\subfloat[]{\label{fig: comparison A}\includegraphics[scale=0.4]{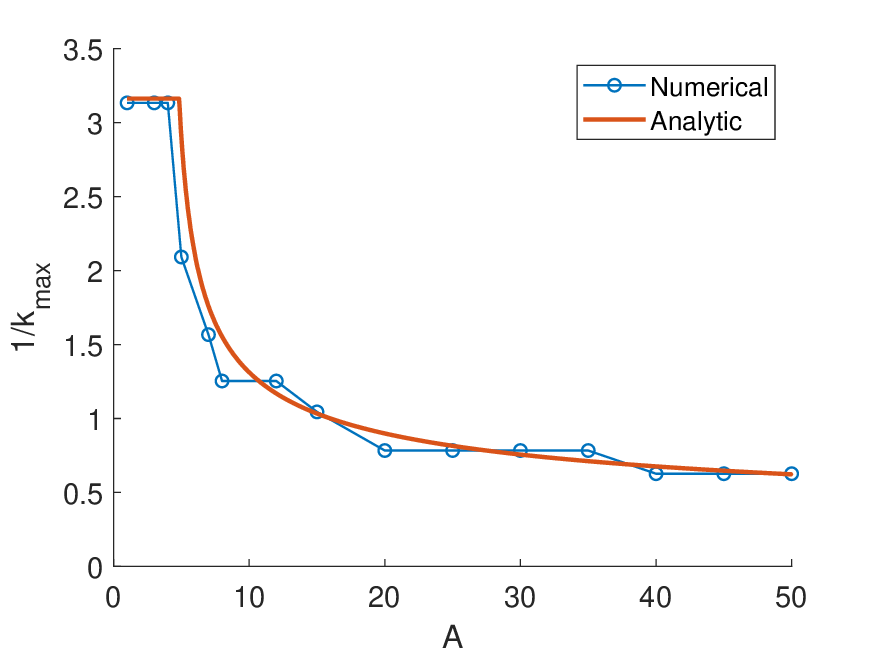}}
\subfloat[]{\label{fig: comparison  B}\includegraphics[scale=0.4]{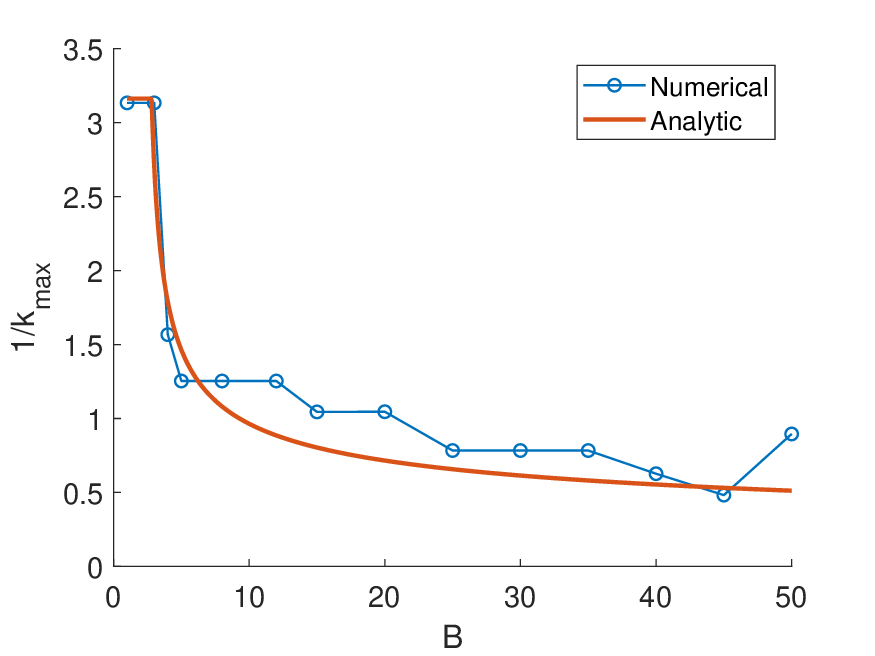}}
\caption{Comparison of the wavelength obtained analytically, using (\ref{eq: critical chemosensitivity}) and (\ref{eq: k critic 1}), and numerically, using the Fourier transform of the solution for (a) $\gamma(M)=1$ and (b) $\gamma(M)=5$.}\label{fig: comparison analytic vs numeric}
\end{figure}

%\vspace{0.2cm}
\noindent\textbf{Introduction of the treatment on already-formed aggregates}
 In this part, we aim to study the influence of the treatment on already formed aggregates. For this, we let the system run in \textbf{P1} (without treatment, $M=0,\ \gamma(M) = 1$) until time $T_1=25$ days, and introduce the treatment uniformly in the domain ($M=1$, $\gamma(M) = 5$). 

In Figure \ref{fig: comparison different AB} left and middle panels, we choose values of the chemosensitivity very close to the critical values corresponding to $k_c$, where the wavenumbers are very different for the cases $\gamma(M)=1$ and $\gamma(M)=5$ as we see in Figure \ref{fig: wavenumber}. In Figure \ref{fig: comparison different AB}, the blue curves describe the formation of aggregates for a time $T_1=25$ days without the treatment, while the cells are proliferating with rate $r_0=0.1 = 0.8 \; \rm{day}^{-1}$. We consider two different scenarios when introducing the treatment: either cells stop proliferating (red curves), or they continue with the same rate as before $r_0=0.8 \; \rm{day}^{-1}$ (yellow curves).

%\textcolor{blue}{\st{We explore three chemosensitivity values: $A=B = 7$ (panel (a)), $A=B=12$ (panel (b)) and $A=B = 70$ (panel (c)).  Red and yellow figures correspond to the solution at time $t=400$, while the blue curve corresponds to the solution at the time we introduce TMZ ($t=200$).} }\textcolor{magenta}{(I THINK THIS PART IS REPETITIVE)}

When we introduce the treatment for values of $A$ and $B$ close to the instability threshold ($A=B=7$, Figure \ref{fig: comparison different AB} left), we observe that the aggregates become steeper and the density in each aggregate reaches the packing capacity $\bar{u}=1$. This clearly leads to more compact aggregates as a result of the nonlinearity introduced in  (\ref{eq: second pahse}) by  the function $\gamma(M)$. The main physical difference between changing the function $\gamma(M)$ and changing the chemosensitivity coefficients $A$ or $B$ is the following. {By changing $A$ or $B$ depending on the concentration of the treatment, we are enhancing aggregation over diffusion, essentially we are changing the motility of cells. By changing $\gamma(M)$ the motility, as well as the elastic properties of the cells in the aggregates are affected.}   When we introduce the treatment while cells keep proliferating, aggregates tend to merge together since the density is growing, {as we can observe in the middle panel of Figure \ref{fig: comparison different AB} by comparing the solution without treatment (blue curve) and with TMZ drug (yellow curve)}.

%(\textcolor{blue}{It's normal we can not compare the solution with and without TMZ at the same time, we should show the blue curve at time $t=400$ too}.

{It is noteworthy that for large values of the chemosensitivity parameter ($A=B=70$, Figure \ref{fig: comparison different AB} right panel), the treatment does not impact the aggregate dynamics. In this case, cell aggregation is mainly driven by the chemotactic term and the cell mechanical properties have little influence. These observations are in agreement with the stability analysis, which shows that the function $\gamma(M)$ has more influence when the chemotactic sensitivity is close to the instability threshold.}

\begin{figure}[tbhp]
   % \centering
\hspace{-1.1cm}\includegraphics[scale=0.4]{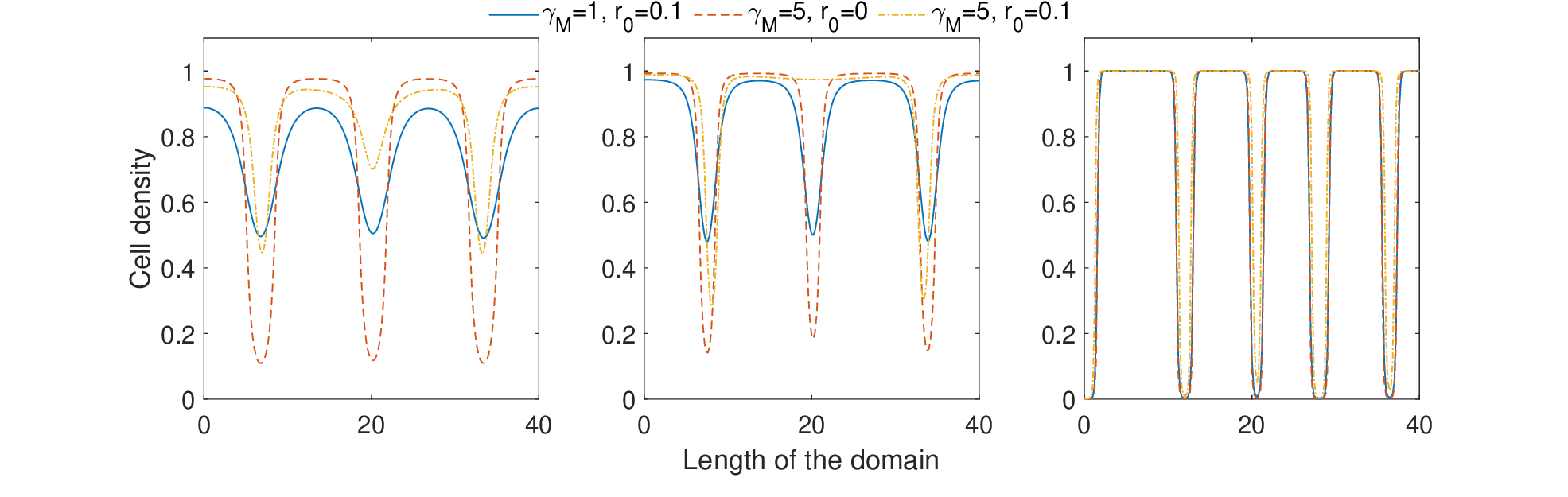}
%\subfloat[]{\label{fig: comparison AB7}\includegraphics[scale=0.27]{}}
 % \subfloat[]{\label{fig: comparison AB12}\includegraphics[scale=0.27]{}}
  %  \subfloat[]{\label{fig: comparison AB70}\includegraphics[scale=0.27]{}}
    \caption{Aggregation pattern when $A=B$ for $A=7$ (left), $A=12$ (middle) and $A=70$ (right). The blue curves are solutions of \textbf{P1} at $T_1=25$ days. The red and yellow curves are solutions of \textbf{P2} at $T=T_1+25$ days when the treatment is introduced uniformly and $r_0=0$ or $r_0=0.1=0.8 \; \rm{day}^{-1}$ respectively.
    %Blue curves: solution of \textbf{P1} at $T_1=25$ days, red curves: solution of \textbf{P2} at $T=T_1+25$ days when the treatment is introduced uniformly  and proliferation is stopped, yellow curves: solution with the same parameters as the red curves when proliferation goes on with $r_0=0.1=0.8 \; \rm{day}^{-1}$.
    }
    \label{fig: comparison different AB}
\end{figure}

\begin{figure}[tbhp]
    \centering
    \includegraphics[scale=0.4]{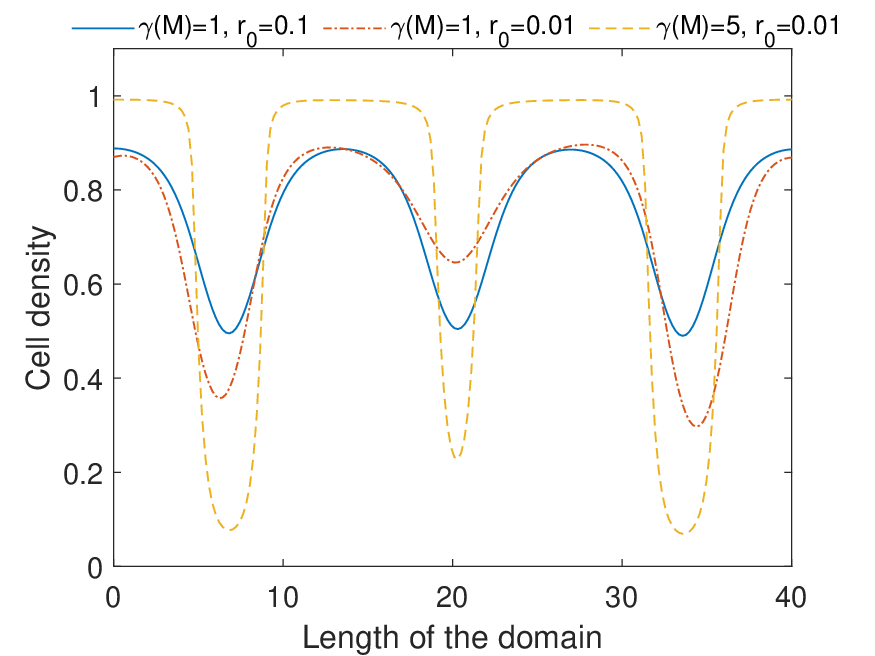}
    \caption{Solution for $A=B=7$. The blue curves is the solution of \textbf{P1} at $T_1=25$ days. The red and yellow curves are the solutions of \textbf{P2} at $T=T_1+25$ days with reduced proliferation rate for $\gamma(M)=1$ and $\gamma(M)=5$, respectively.}
    \label{comparison no elastic variation}
\end{figure}

{Finally, by comparing the red and yellow curves in Figure \ref{fig: comparison different AB}, it is clear that cell proliferation has a major impact on the size of the aggregates. If the treatment has the double effect to stop proliferation as well as modifying the cell mechanical properties, cell aggregates will become very dense and well-separated, while merging aggregates are still observed if the treatment has the sole effect to change the cell mechanical properties.}

{In Figure \ref{comparison no elastic variation} we compared the results when the TMZ only affects the proliferation rate and the mechanical properties of the cells remain unchanged (red curve), with the case when TMZ affect both, proliferation and elasticity (yellow curve). We observe that when TMZ only reduces the proliferation rate of GBM cells, we do not observe a significant sharpening of the cell clusters, compared to the case when TMZ has the coupled action of reducing proliferation and changing the mechanical properties of the cells. Together with Figure \ref{fig: comparison different AB}, we therefore conclude that the mechano-transduction phenomenon induced by the presence of the drug could be sufficient to explain the shrinking of the cell aggregates documented in Section \ref{sec: experiments}.} 

In all previous experiments we chose to introduce the treatment at time $T_1=25$ days, we now aim to study the effect of the treatment introduced at different times in the aggregation process. \\

\noindent\textbf{Introduction of the treatment at different times}
{Here, we consider the case when the treatment is introduced at different times in the aggregation process. As before, we let the system run in \textbf{P1} (without drug, $M=0,\gamma(M) = 1$) until time $t=T_1$ and introduce the treatment uniformly in the domain ($M=1$, $\gamma(M) = 5$). We consider the cases when the treatment has the ability to stop proliferation, and when the treatment only acts on the cell mechanical properties.  In Figure \ref{fig: different IC}, we show the results at time $T=T_1 + 25$ days (red curve), when the treatment is introduced at times $T_1=6.25$ days (left plots), $T_1 = 12.5$ days (middle plots) and $T_1 = 37.5$ days (right plots). Figures \ref{fig: comparison IC} shows the results when the treatment stops proliferation while Figures \ref{fig: comparison IC prol} shows the results when the treatment only changes the mechanical properties of the cells. For each, the blue curves are the density profiles before introducing the treatment.  

As one can observe, in Figures \ref{fig: comparison IC} and \ref{fig: comparison IC prol}, introducing the treatment at different times of the aggregation process {has} a major impact on the size of the aggregated patterns formed at a latter time. Introducing the treatment at an earlier time ($T_1=6.25$ days, left figures) enables to obtain smaller aggregates compared to when the treatment is introduced on already formed aggregates ($T_1=37.5$ days, right figures). This effect is more visible when the treatment has the double effect of blocking cell proliferation and changing the elasticity (compare red curves in Figures  \ref{fig: comparison IC} and \ref{fig: comparison IC prol}). In this case, the earlier the treatment is introduced, the smaller the aggregated patterns. When the treatment stops proliferation as well as it changes the cell mechanical properties, and is introduced at later times (right panel of Figure \ref{fig: comparison IC}) we recover the observation of the real systems, where the treatment induces a shrinking of the aggregate and favors the formation of more compact cell structures. This effect is not observed when proliferation is active with the treatment, (right panel of Figure \ref{fig: comparison IC prol}) where aggregates are merging and they are larger than before the treatment introduction. This suggests indeed that the treatment has the double effect of blocking cell proliferation as well as changing the cell mechanical properties. The model suggests that introducing the treatment at earlier times of the tumour development could enable to control the size and separation of the tumour aggregates. }\\

\begin{figure}[tbhp]
    %\centering
    \hspace{-0.6cm}\subfloat[]{\label{fig: comparison IC}\includegraphics[scale=0.4]{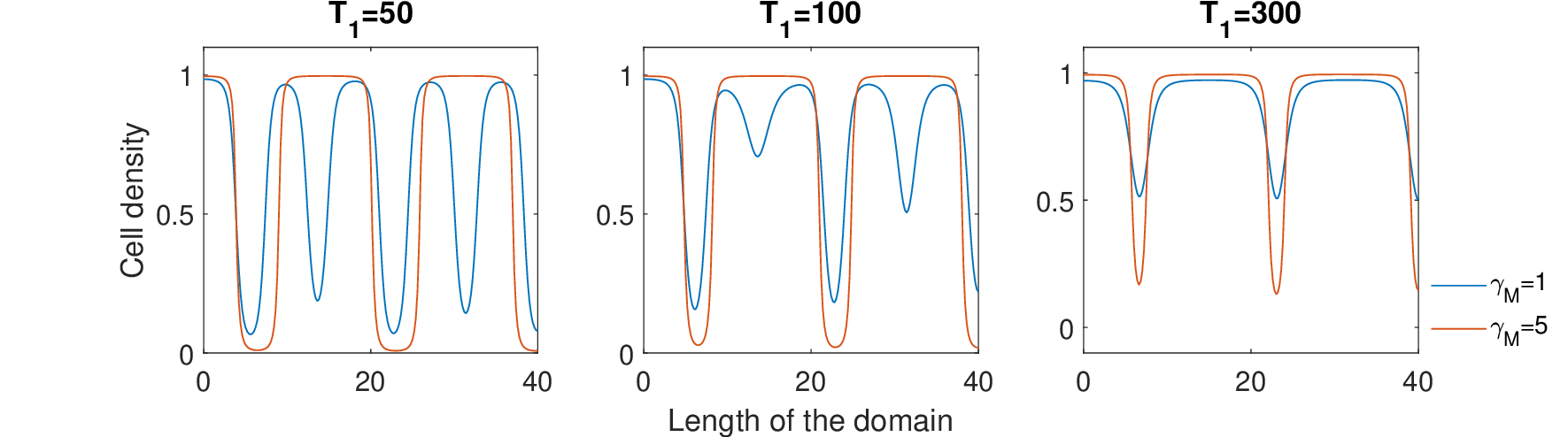}}
    
    \hspace{-0.6cm}\subfloat[]{\label{fig: comparison IC prol}\includegraphics[scale=0.4]{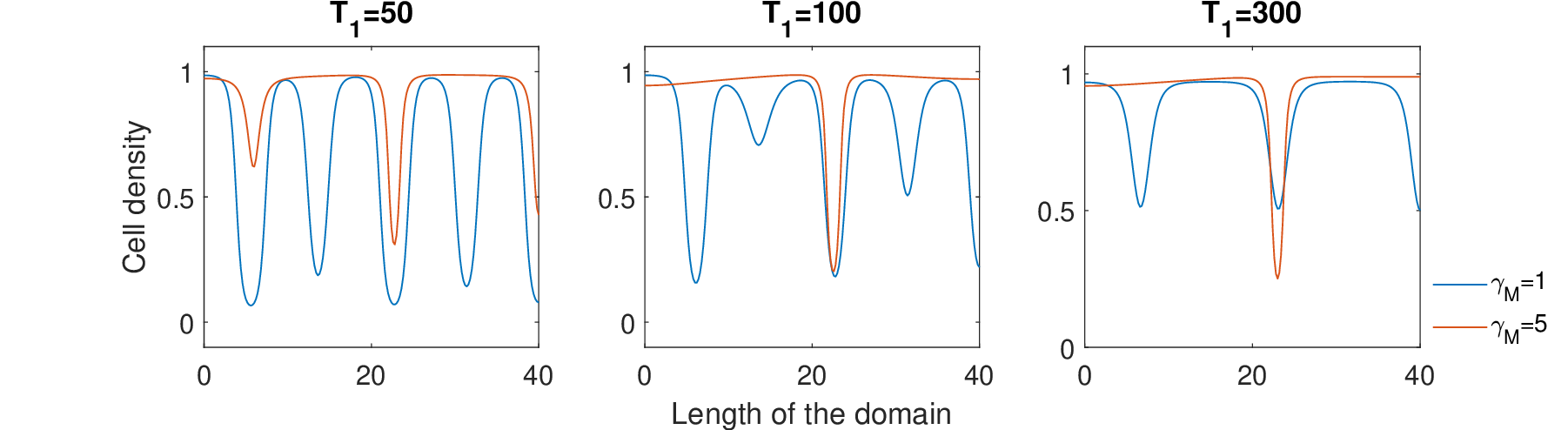}}
    \caption{Introduction of  the treatment at different times $T_1=6.25,\ 12.5,\ 37.5$ days when $u_0=0.5$ and $A=B=12$. The blue curves give the initial condition $u(\mathbf{x},T_1)$ for the part \textbf{P2}. In (a) $r_0=0$ and in (b) $r_0=0.1=0.8 \; \rm{day}^{-1}$. The red curves are at $T=T_1+25$ days.}
    \label{fig: different IC}
\end{figure}

%\vspace{0.2cm}
\noindent\textbf{Introduction of the treatment in the middle of the domain}
\begin{figure}
    %\centering
    \hspace{-0.3cm}\includegraphics[scale=0.4]{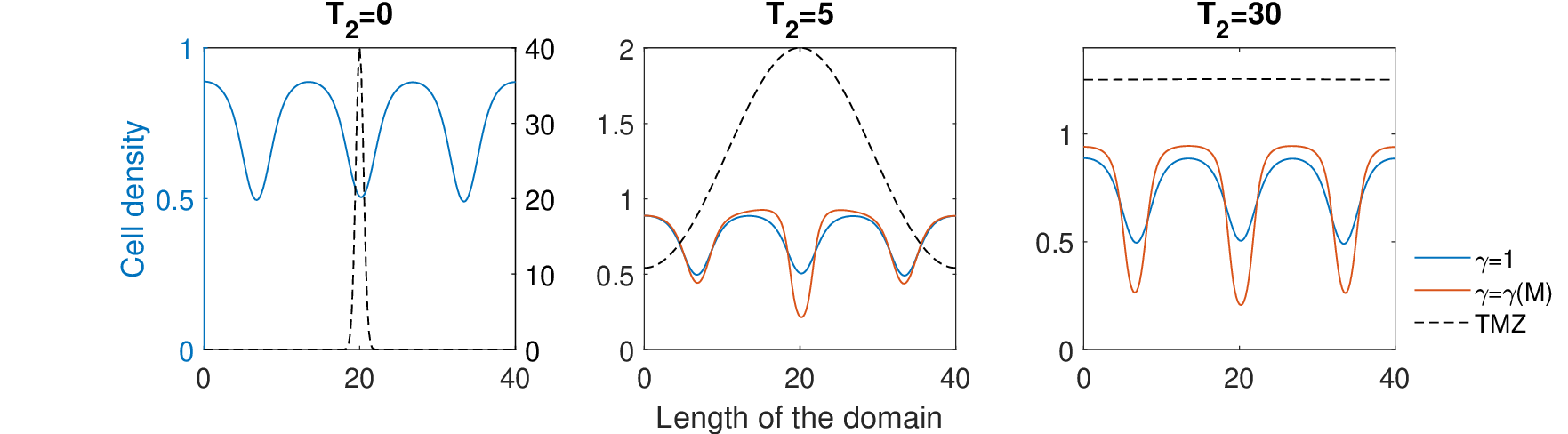}
    \caption{Introduction of the treatment in the centre of the domain. The blue curves correspond to the solution of \textbf{P1} at $T_1=25$ days and the red curves are the solution of \textbf{P2} at $T=T_1+0.6$ days and $T=T_1+3.75$ days.}
    \label{fig: center TMZ 1D}
\end{figure}
Finally, we aim to study the case when the treatment is introduced in the center of the domain and diffuses in the environment. Here we assume that the treatment is not consumed  or escapes the domain, therefore $\delta_0=0$ in (\ref{eq: nondimensional second phase}). In Figure \ref{fig: center TMZ 1D}, we show the density profiles of the solution before introducing the treatment (blue curves), and when the treatment is present (red curves), at times $T_2=0$ (left), $T_2=0.6$ days (center) and $T_2=3.75$ days (right). The distribution of the treatment follows a Gaussian of the form $M(\mathbf{x},0)=Ce^{\frac{(\mathbf{x}-\mathbf{x}_0)^2}{2\sigma^2}}$, where $C=40$ is the amplitude, $\mathbf{x}_0$ is the center of the Gaussian and $\sigma=0.5$ describes the spread. As one can observe, the large concentration of the treatment in the middle immediately sharpens the interface between already-formed aggregates, and favors the separation of the cell clusters. {As the treatment diffuses in the domain (see the middle figure of Fig. \ref{fig: center TMZ 1D}), the cell cluster interfaces sharpen, creating denser and well-separated cell clusters.} 

%{In the next section, we show the results in the 2D case.}

\subsection{Numerical results for a two dimensional case}\label{sec: numerical 2d}
For the 2D simulations we consider that $\rm{\Omega}$ is a disk of radius $R$ which can be defined as $\rm{\Omega}=\{ (x,y)\in \mathbb{R}^2:\ x^2+y^2<R^2\}$ where the boundary is given by $\partial \rm{\Omega}=\{(x,y)\in\mathbb{R}^2:\ x^2+y^2=R^2 \}$.
{The proliferation rate is chosen to be $r_0=0.4 \; \rm{day}^{-1}$ and the initial homogeneous density as well as the carrying capacity are set to $u_{\max} = u_0=0.1$ or $u_{\max} = u_0=0.5$.  The other parameters can be found at the beginning of Section \ref{sec: 1D numerical result}.}

In Figure \ref{fig: formation of spheroids 2D} we show the formation of aggregates for different values of $A$ without the treatment, for $u_0=0.1$ and $r_0=0.4 \; \rm{day}^{-1}$. We observe that for $A=10$ (Figure \ref{fig: 2D without TMZ A10}) we do not have patterns, in agreement with the analytic results obtained in Figure \ref{fig: comparison IC prol}  since this value of $A$ is less than $A^c \approx 16.7$. As we increase the chemosensitivity parameter, the aggregates become more compact. From Figure \ref{fig: 2D without TMZ A20 } we observe the phenomena of two aggregates merging together, analogous to the one dimensional results in Figure \ref{fig: formation spheroids 01}.
As expected, by changing {the carrying capacity and the initial density of cells to $u_0 = u_\textnormal{max} = 0.5$}, the patterns change shape. We observe a transition from spot-like patterns in Figure \ref{fig: formation of spheroids 2D} to maze-like structures in Figure \ref{fig: formation of spheroids 2D diff IC}. This behaviour has been widely studied experimentally \cite{ouyang1991transition}, numerically \cite{meinhardt1989models,nagorcka1992stripes} and more recently, also including a volume-filling approach \cite{painter2002volume}.

\begin{figure}[tbhp]
  \centering
  \subfloat[]{\label{fig: 2D without TMZ A10}\includegraphics[scale=0.28]{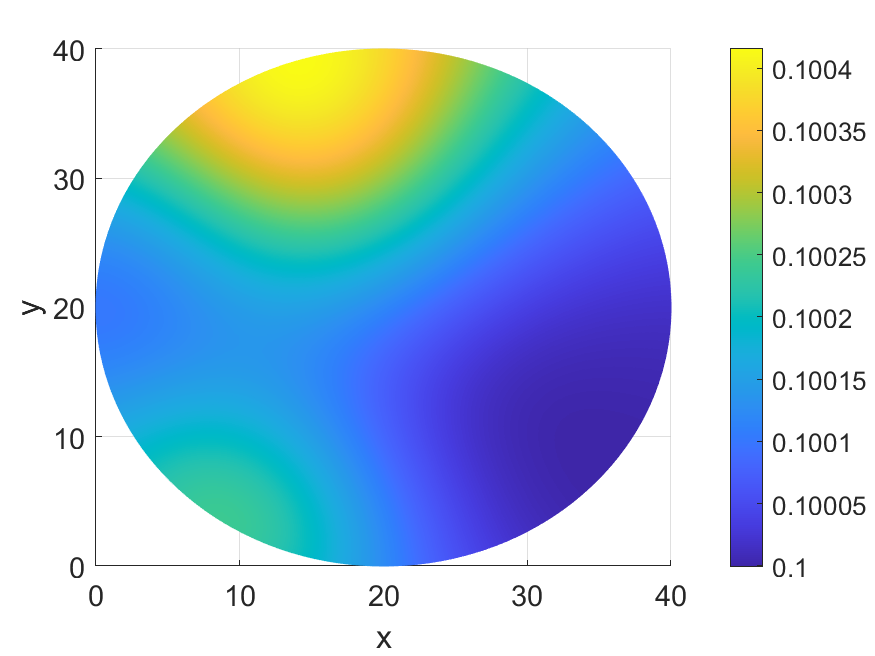}}
    \subfloat[]{\label{fig: 2D without TMZ A20 }\includegraphics[scale=0.28]{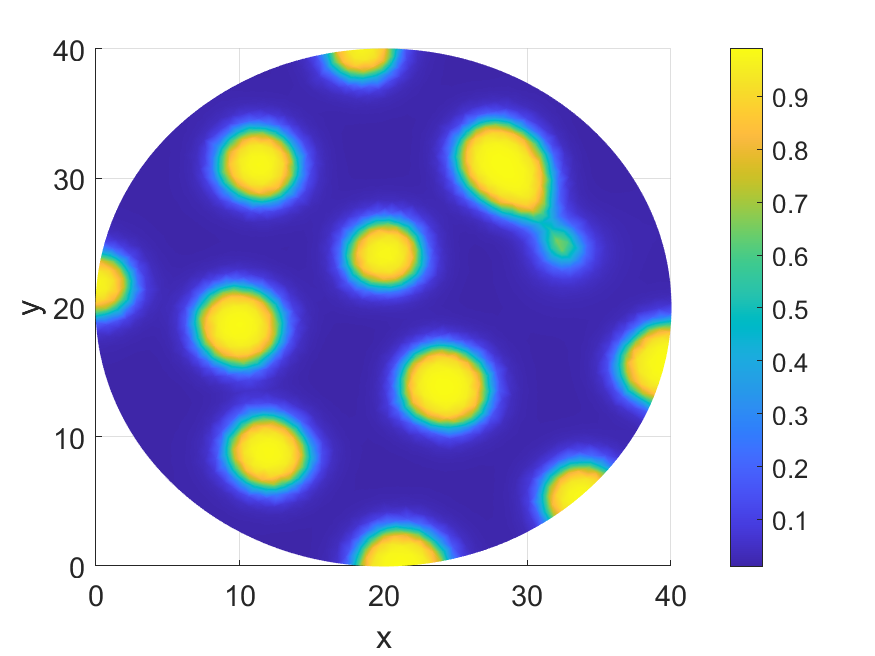}}
    \subfloat[]{\label{fig: 2D without TMZ A70}\includegraphics[scale=0.28]{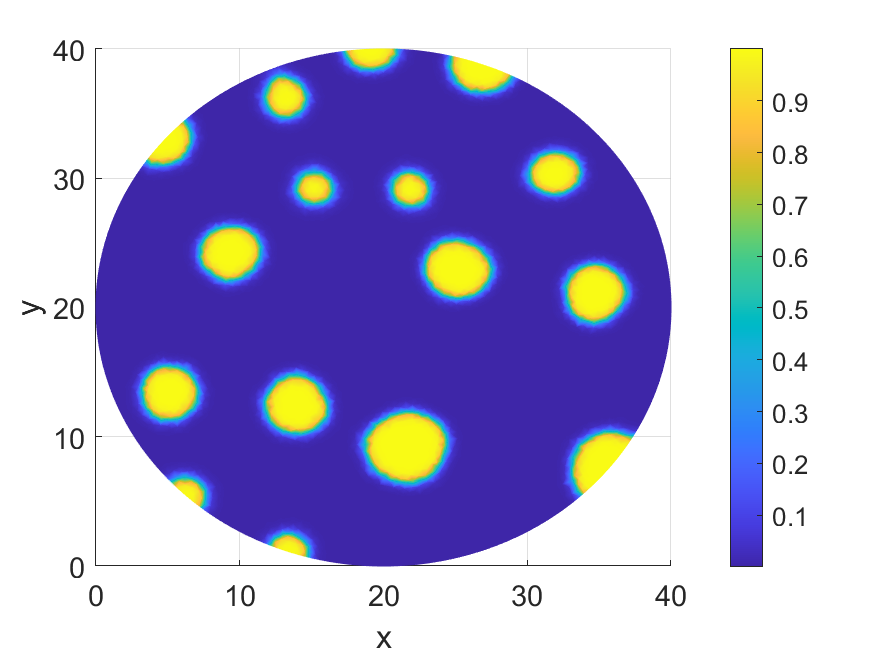}}
    \caption{Formation of aggregates at $T_1=25$ days when $u_0=0.1$, $r_0=0.4 \; \rm{day}^{-1}$ and (a) $A=10$, (b) $A=20$ and (c) $A=70$.}\label{fig: formation of spheroids 2D}

  \subfloat[]{\label{fig: 2D without TMZ A7 }\includegraphics[scale=0.28]{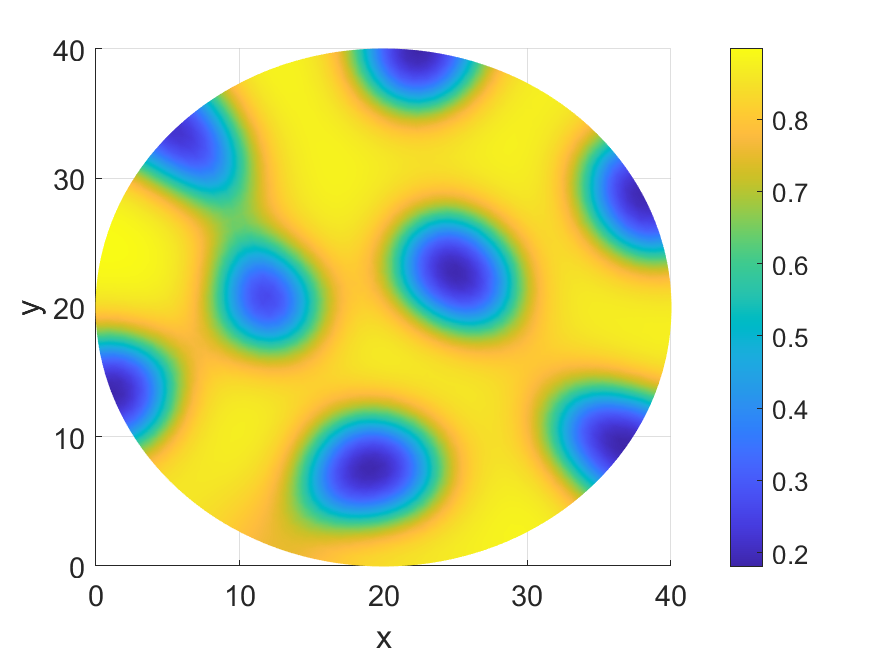}}
    \subfloat[]{\label{fig: 2D without TMZ A12 }\includegraphics[scale=0.28]{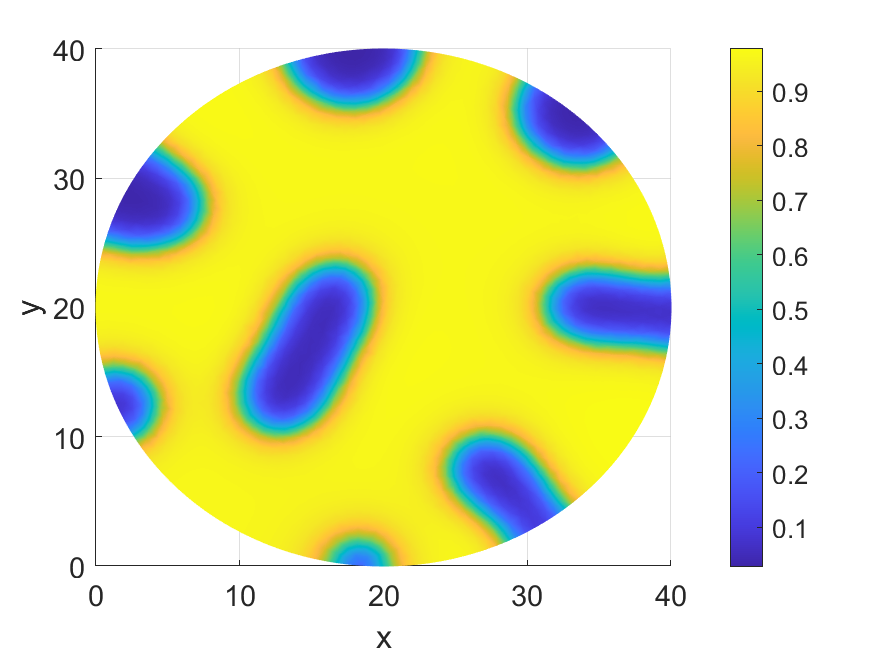}}
    \subfloat[]{\label{fig: 2D without TMZ A50 }\includegraphics[scale=0.28]{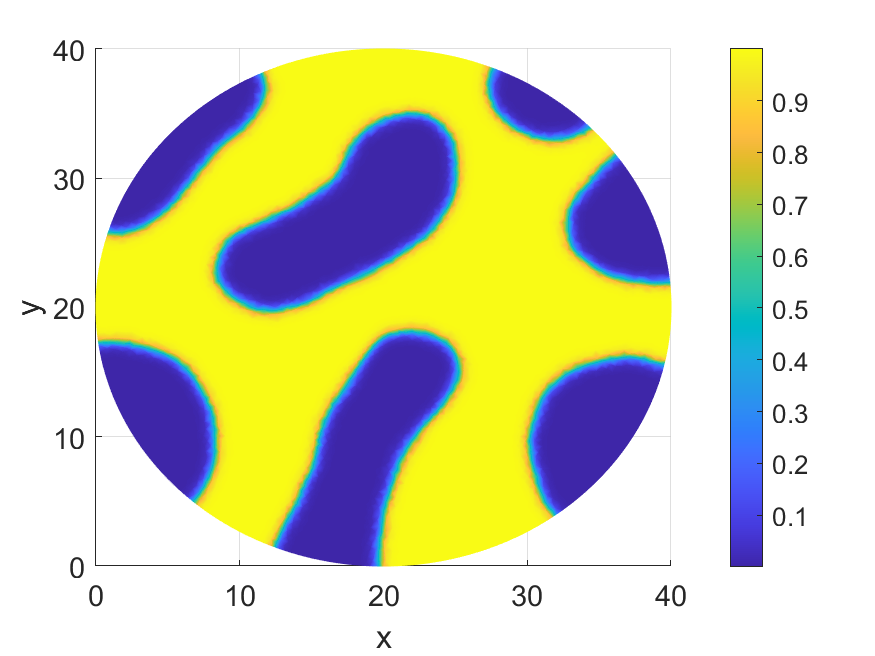}}
    \caption{Formation of aggregates at $T_1=25$ days without the treatment and when $u_\textnormal{max} = u_0=0.5$, $r_0=0.4 \; \rm{day}^{-1}$ and (a) $A=7$, (b) $A=12$ and (c) $A=50$.}\label{fig: formation of spheroids 2D diff IC}
\end{figure}

{Analogous to the one dimensional case, we consider two different initial conditions for the treatment: (i) we first include the treatment uniformly in the domain with $M=5$, and (ii) we introduce the treatment with a localised concentration either in the middle of the domain or at the boundary, and let it {diffuse} in space. In these simulations, the treatment is supposed to block proliferation as well as changing the cell mechanical properties.} 

{In order to compare the change in size of the aggregates before and after the treatment, we compute the difference between the solution of the first part of the experiments $u(\mathbf{x},T_1)$ coming from (\ref{eq: final system 1}), when the aggregates are formed (at $T_1=25$ days), and the solution  $w(\mathbf{x},T_2)$ of (\ref{eq: nondimensional second phase}), once the treatment has been inserted (at time $T = 25 + T_2$ days). In Figure \ref{fig: difference solutions}, we show the results for different values of the chemosensitivity parameter $A=B$, when the aggregates have been exposed to a uniform distribution of the treatment for a time $T_2=3.75$ days. {The positive values of this difference, observed on the external boundary of the clusters, indicate the shrinking of the aggregate due to the action of TMZ, while negative values on the inner boundaries of the clusters highlight the increased concentration on the boundaries. TMZ therefore favors the production of smaller and more concentrated aggregates, as described in Section \ref{sec: mathematical model}}. When we compare the results in Figure \ref{fig: difference solutions 100} and \ref{fig: difference solutions 1000} for different values of $A$ and $B$ we observe that the effect of the treatment is stronger when the value of the chemosensitivity parameter $B$ is closer to its critical value (see Figure \ref{fig: wavenumber 2D}). This is in accordance with the results of the stability analysis. The cell mechanical properties (controlled by the function $\gamma(M)$) have less influence on the cell cluster sizes when the chemosensitivity parameter $A=B$ is increased (see Figure \ref{fig: wavenumber 2D}). 
}

\begin{figure}[tbhp]
\centering
\subfloat[]{\label{fig: difference solutions 100}\includegraphics[scale=0.28]{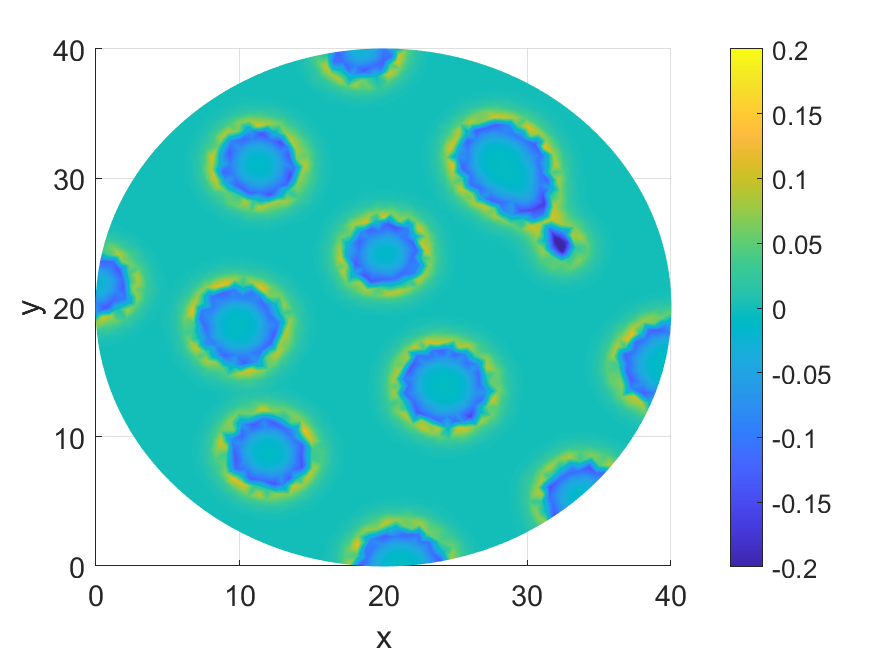}}
    \subfloat[]{\label{fig: difference solutions 1000}\includegraphics[scale=0.28]{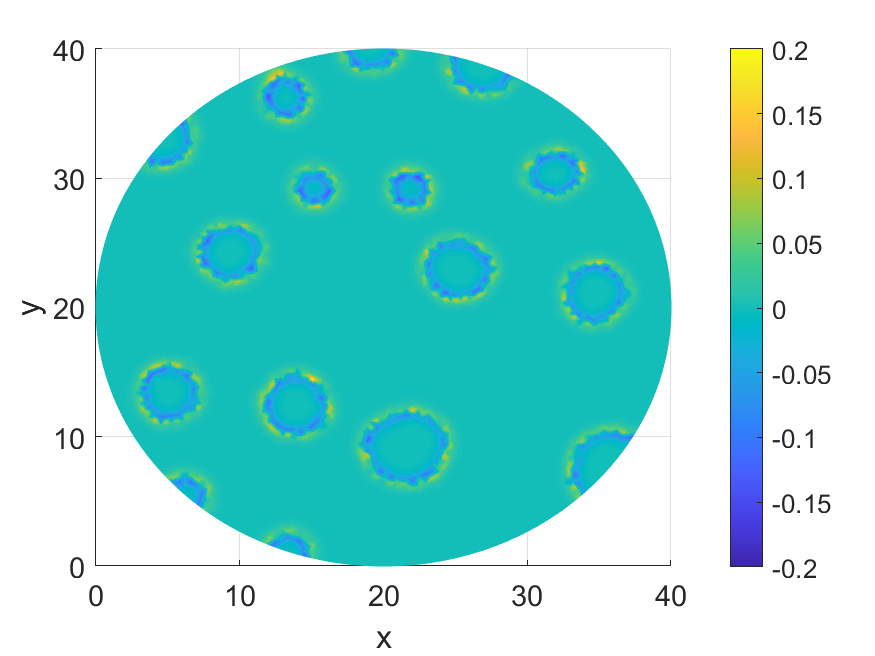}}
    \caption{Difference between the solutions{ obtained without TMZ and with TMZ,  $u(\mathbf{x},T_1)-w(\mathbf{x},T_2)$ for $T_1 = 25$ days, $T_2 = 3.75$ days and (a) $A=B=20$ and (b) $A=B=70$. Here $\gamma(M)=11$, $r_0=0.4 \; \rm{day}^{-1}$ and $u_\textnormal{max} = u_0=0.1$. } }\label{fig: difference solutions}
\end{figure}
\begin{figure}[tbhp]
\centering
\subfloat[]{\label{fig: difference solutions 10 Gauss}\includegraphics[scale=0.28]{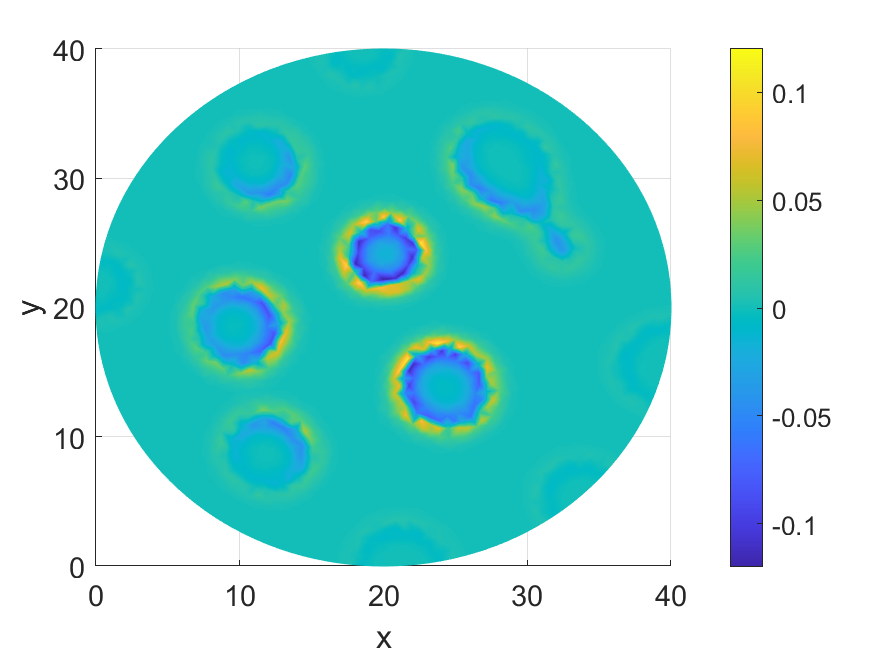}}
    \subfloat[]{\label{fig: difference solutions 30}\includegraphics[scale=0.28]{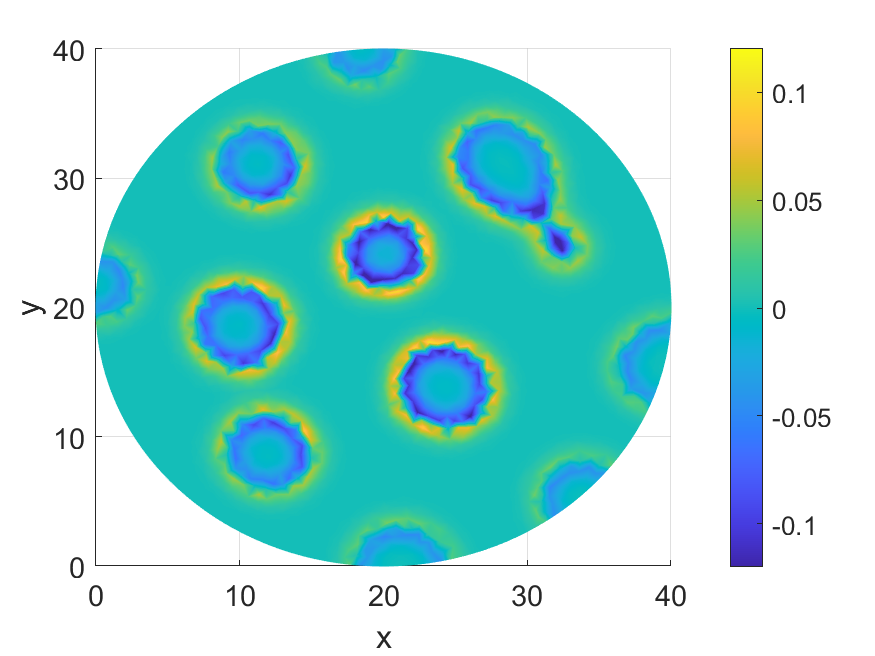}}
        \subfloat[]{\label{fig: difference solutions 100 Gauss}\includegraphics[scale=0.28]{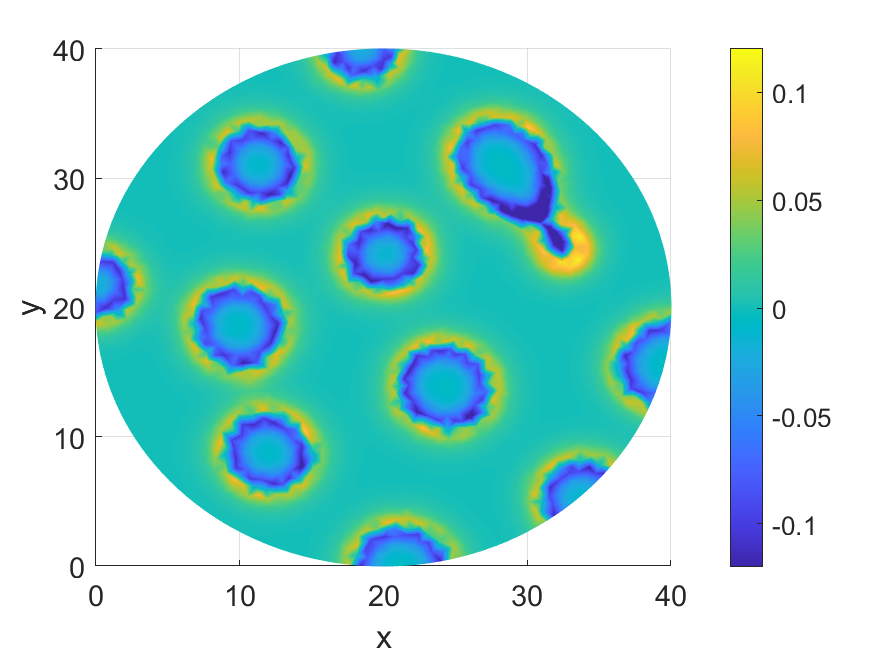}}
    \caption{Difference between the solutions when the initial concentration of the treatment is a Gaussian function centered in the domain. (a) $T_2=1.25$ days,  (b) $T_2=3.75$ days (c) $T_2=12.5$ days for $T_1=25$ days, $r_0=0.4 \; \rm{day}^{-1}$ and $u_0=0.1$.}\label{fig: comparison 2d gauss}
\end{figure}

{We now study the case when the treatment is introduced in the middle of the domain and diffuses in the environment. To this aim, the initial concentration of the treatment is assumed to be a Gaussian function with width $5$ centered in the middle of the domain.  We consider that the treatment is not consumed by the cells in the time scales we are interested in, and choose $\delta_0=0$. In Figure \ref{fig: comparison 2d gauss} we show the evolution of the difference between the two solutions $u(\mathbf{x},T_1)-w(\mathbf{x},T_2)$, where $T_1=25$ days is the time at which the treatment is introduced and $T_2$ is the duration of the treatment. We explore different times $T_2=1.25,\ 3.75,\ 12.5$ days.} For short times, the effect of the treatment is only noticed by the aggregates at the center of the domain and therefore the difference between the two solutions close to the boundaries is zero. As time increases, the concentration of the treatment reaches the whole domain as is observed in Figure \ref{fig: difference solutions 100 Gauss}. 

{To understand if the position of the Gaussian plays a role in our study, we now consider the initial condition for the TMZ to be a Gaussian centered at a boundary of the domain.}
%\begin{figure}[tbhp]
%  \centering
%  \subfloat[]{\label{fig:appendix-initial-cells}\includegraphics[scale=0.28]{TMZ-border-Initial_datum_for_cells.png}}
%  \subfloat[]{\label{fig:appendix-initial-TMZ}\includegraphics[scale=0.28]{TMZ-border-Initial_datum_for_TMZ.png}}
%  \caption{Initial conditions for the cell density (a) and the TMZ concentration (b) to study the effect of the introduction of the TMZ at the border of the domain.} 
%  \label{fig: border TMZ}
%\end{figure}
{Figure \ref{fig:evolution-TMZ-border} depicts the difference ${u(x,T_1) - w(x,T_2)}$ when the treatment is introduced at time $T_1 = 25$ days and for three different times $T_2$. We observe that the TMZ concentration diffuses from the left border of the domain, and progressively the aggregates shrink in response to TMZ. }

\begin{figure}[tbhp]
  \centering
  \subfloat[]{\label{fig:appendix-evolution-beg}\includegraphics[width=0.32\linewidth]{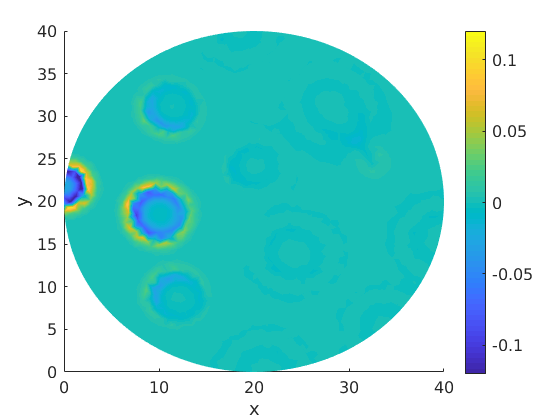}}
  \subfloat[]{\label{fig:appendix-evolution-1sec}\includegraphics[width=0.32\linewidth]{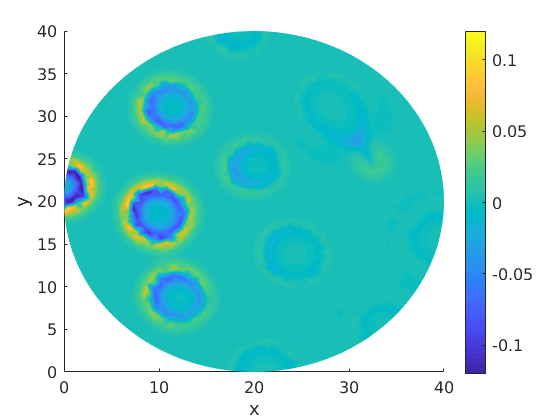}}
  \subfloat[]{\label{fig:appendix-evolution-end}\includegraphics[width=0.32\linewidth]{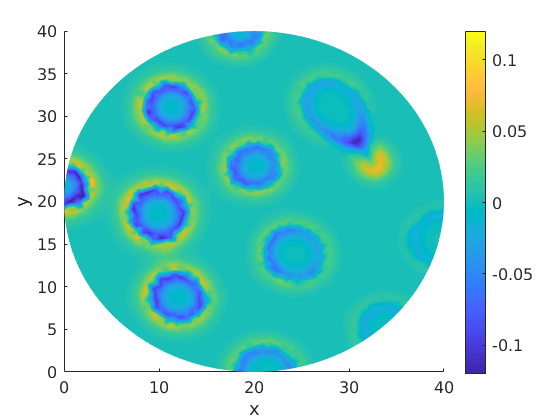}}
  \caption{{Evolution of the difference ${u(x,T_1) - w(x,T_2)}$ for $T_1 = 25$ days, at three different times: (a) $T_2=1.25$ days, (b) $T_2=3.75$ days, (c) $T_2=12.5$ days.} }
  \label{fig:evolution-TMZ-border}
\end{figure}

Finally, we also study the effect of the treatment at earlier stages of the formation of the aggregates. Figure \ref{fig: TMZ different times} shows the different patterns obtained during the formation of the aggregates at different times (part \textbf{P1}), top row, and the corresponding effect of the treatment (part \textbf{P2}), bottom row. For example, introducing the treatment at $T_1=6$ days leads to a significant reduction of the size of the pattern with a reasonably low concentration of the treatment. Identifying this specific time in real patients could make the treatment much more effective and reduce the spread of the cancer cells.

\begin{figure}[tbhp]
\centering
\subfloat[]{\label{fig: noTMZ 30 }\includegraphics[scale=0.28]{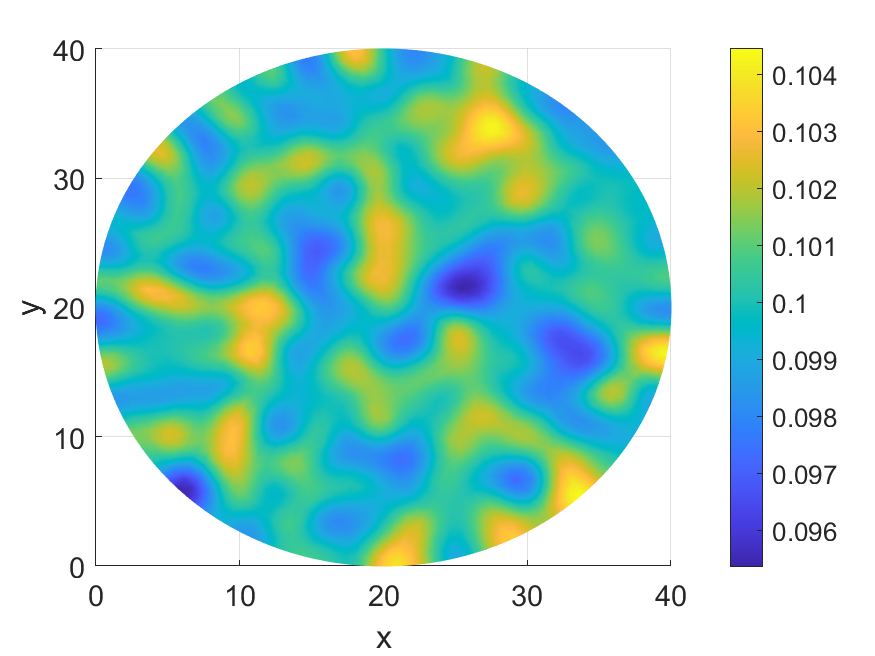}}
    \subfloat[]{\label{fig: noTMZ 500 }\includegraphics[scale=0.28]{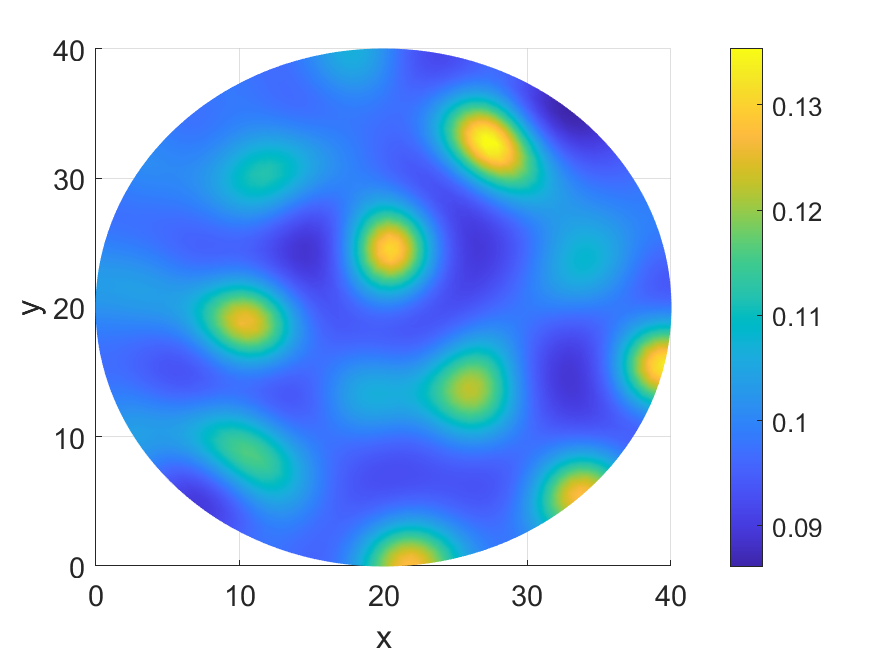}}
        \subfloat[]{\label{fig: noTMZ 1000 }\includegraphics[scale=0.28]{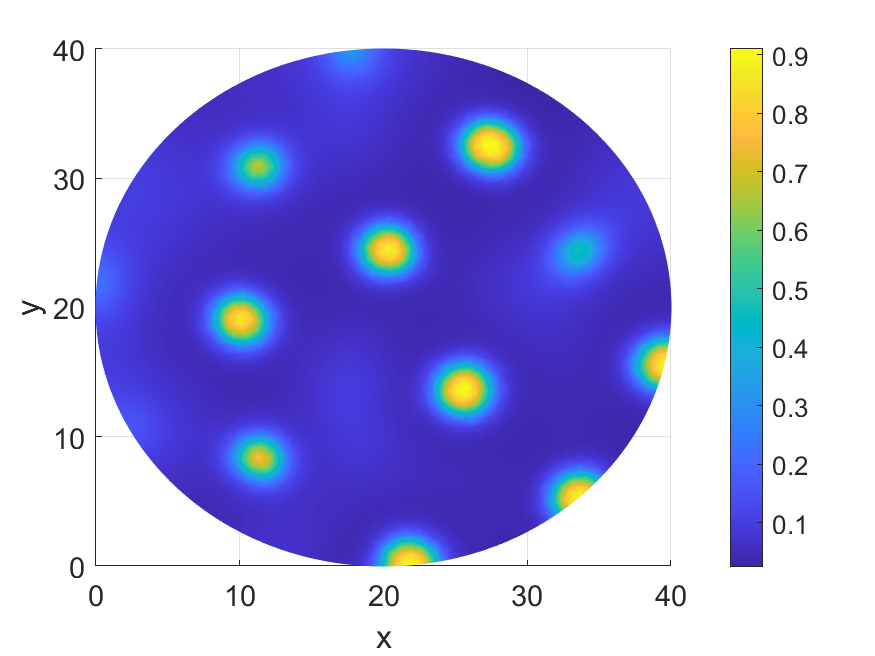}}
        
        \subfloat[]{\label{fig: TMZ 30 }\includegraphics[scale=0.28]{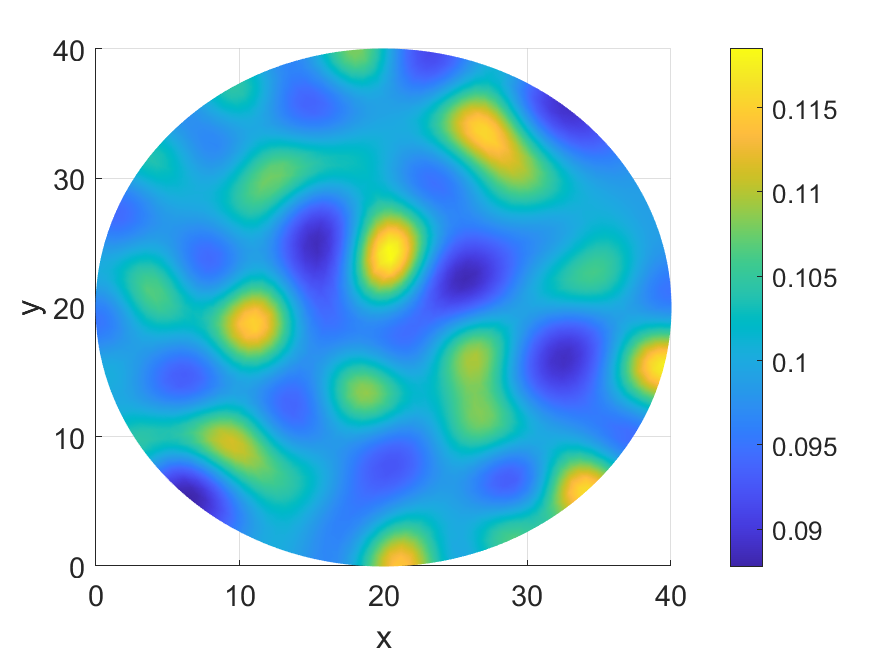}}
    \subfloat[]{\label{fig: TMZ 500 }\includegraphics[scale=0.28]{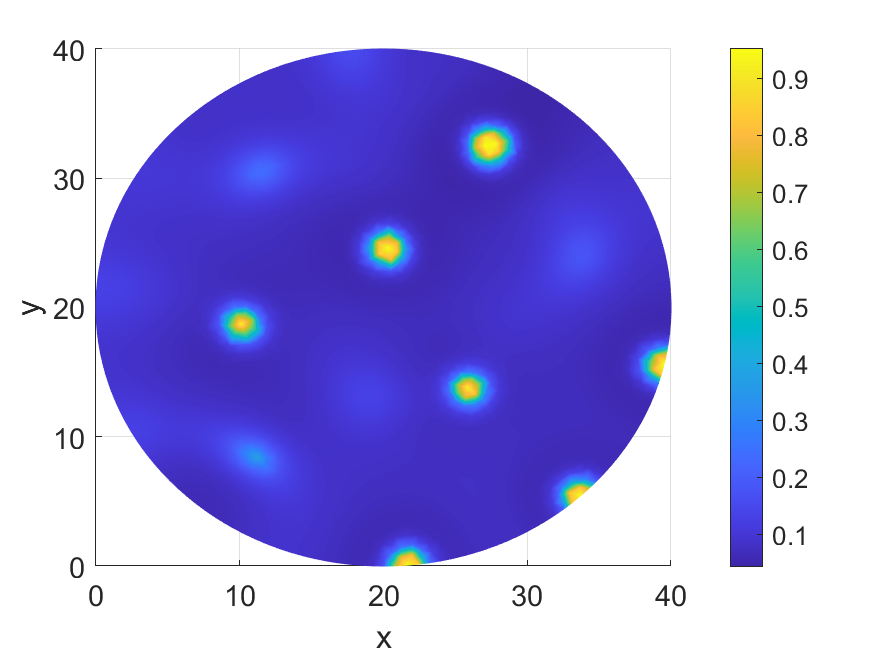}}
        \subfloat[]{\label{fig: TMZ 1000 }\includegraphics[scale=0.28]{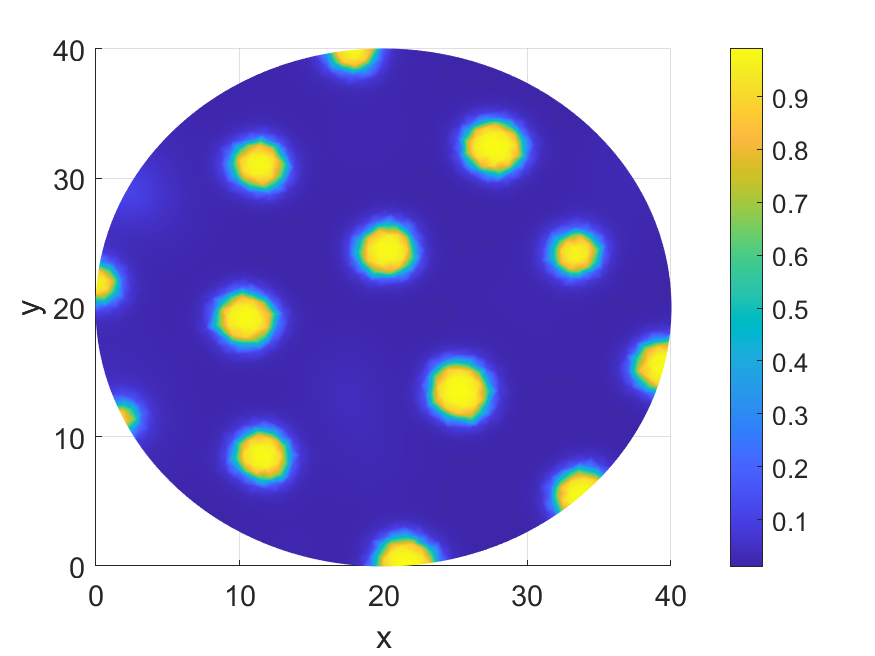}}
        \caption{Effect of the treatment at different times while the aggregates are forming for $A=20$. Top row: solution obtained for \textbf{P1} at (a) $T_1=3.75$ days, (b) $T_1=6.2$ days and (c) $T_1=12.5$ days. Bottom row: solution  at time $T = T_1 + 25$ days, when the treatment has been introduced at times (d) $T_1=3.75$ days, (e) $T_1=6.2$ days and (f) $T_1=12.5$ days.}\label{fig: TMZ different times}
\end{figure}

\section{Discussion of results and perspectives}\label{sec: discussion and perspectives}

{In this paper we propose a mechanism for the effect of certain treatments on tumours formed by a chemotaxis type self-organisation phenomenon. Inspired by the experiments concerning the action of TMZ on Glioblastoma cells mentioned in Section \ref{sec: experiments}, we considered the particular case of a non-cytotoxic treatment which could induce changes in the mechanical properties of individual tumour cells by making them pass from rigid bodies to {semi-elastic} entities. We explored two scenarios: in the first one, only cell's plasticity is impacted by the treatment, and in the second one, the treatment has a double effect of preventing cell proliferation as well as changing cell mechanics.} 

{Under these hypotheses, we obtained a modified version of the Keller-Segel model, known as the nonlinear volume-filling approach for cell motion, where the cell mechanical properties are taken into account in the form of the so-called squeezing probability. In the nonlinear volume-filling Keller-Segel model, this squeezing probability function could be related to the amplitude of the transport term towards zones of high chemoattractant concentration (chemosensitivity), as well as with the (nonlinear) diffusion coefficient. }

{By performing a linear stability analysis, we are able to detect and characterise the parameter ranges for which the homogeneous distribution is unstable, \emph{i.e.} the ranges for which patterns appear as the result of the dynamics. } {We show that the emergence of patterns without treatment (\emph{i.e.} when cells act as rigid bodies) is driven by a fine interplay between the chemotactic sensitivity, which tends to aggregate the cells, and the diffusion, together with the logistic growth, which tend to smoothen the solution. We are able to compute the critical chemosensitivity value above which the system self-organizes into aggregates, and characterise the size of the aggregates as a function of the model parameters.} 

{Under treatment, \emph{i.e.} when cells behave as {semi-elastic} entities, we show that the critical value of the chemosensitivity above which patterns emerge is smaller than that in control cultures, showing that as cells become more elastic, they tend to aggregate more easily than when they behave as rigid entities. }

{We are able to completely characterise the size of the patterns and show that {semi-elastic} cells create smaller aggregates than rigid entities for the same value of the chemosensitivity. These results suggest that the mechanical properties of individual cells have a huge impact on the shape and size of the aggregated patterns at the population level.} 

{Moreover, we show that the ratio between the tight packing cell density and the carrying capacity of the TME plays a major role in the size and shape of the obtained patterns. For large values of this ratio, the aggregation abilities of the system are essentially driven by the chemotactic transport term while the individual cell mechanical behaviour has little impact on the shape and size of the patterns. However, for smaller values of this ratio, \emph{i.e.} when the tight packing density is closer to the carrying capacity of the environment, cell mechanics has a huge influence on the behaviour of the population. }

{These results are confirmed by numerical experiments in 1D and 2D for which, given an initial perturbation of the homogeneous cell distribution, we observe the emergence of cell aggregates and we recover the critical values of the chemosensitivity predicted by the stability analysis. We obtain a very good correspondence between the simulations and the theoretical predictions, for the appearance of patterns as well as their size. }

{By performing numerical simulations of the systems \eqref{eq: final system 1} and  \eqref{eq: nondimensional second phase}, we recover the experimental observations: introducing the treatment  on already-formed aggregates leads to a rapid formation of more compact patterns. As the treatment diffuses in the domain (changing locally the cell mechanical properties as it goes), it sharpens the border of the cell aggregates and leads to denser and well-separated clusters.} 

{While the border sharpening of the clusters is independent on whether proliferation is activated or not during treatment, the shrinking of the clusters is clearer when the treatment has the double effect of changing the cell mechanical properties as well as blocking cell proliferation. Indeed, when proliferation is still active in the presence of the treatment, we observe the merging of existing clusters and this results in aggregates being larger than before treatment. 
These results suggest a possible mechanism for the shrinking of the aggregates observed under the experimental conditions described in Section \ref{sec: experiments}: TMZ might not only stop cell proliferation, but might also generate a stress in the environment to which cells respond to by changing their mechanical state. }

{While alterations of mechanical properties, around or inside the tumour, are common in solid tumours including GBM, {the question of the nature and the regulation of cancer cells through mechano-sensitive pathways are largely unknown. Recently, in a Drosophila model, glioma progression has been associated to a regulatory loop mediated by the mechano-sensitive ion channel Piezo1 and tissue stiffness \cite{chen2018feedforward}.}
{A direct perspective of these works consists in verifying the potential mechano-sensing effect of TMZ proposed in the present model, through direct measures in real systems by studying the mechanical properties of individual GBM cells which have been exposed to TMZ treatment. }
{The targeting of mechano-sensitive pathways after TMZ treatment may provide new therapeutic angles in GBM and in more general settings.}}

{Moreover, it would be interesting to identify other clinical settings where the effects of the treatment are similar to those of TMZ in GBM, and to check if the effects are due to changes in the tumor cell properties corresponding to the general hypothesis of the model constructed in this work.} 
{In the future, exhaustive quantitative comparison with experiments will allow for systematic choice of parameters and validation of the mechanisms we propose here. From a biological point of view, a natural sequel of this work consists in studying the coupled effect of TMZ and irradiation. Indeed, even if TMZ alone seems not to suffice to decrease the tumour mass, the coupling of TMZ treatment with irradiation has been shown to have more efficient effects than irradiation alone \cite{stupp2009effects,barazzuol2012vitro,van2000survival}. It would also be interesting to introduce a second treatment with cytotoxic effects in this model, to study whether the mechanical changes of individual cells induced by TMZ could explain the better response of the system to irradiation treatments.}
{Finally on a modelling viewpoint, it would be interesting to extend the model to take into account cell-cell adhesion in the spirit of \cite{Painter2006}, in order to study the effects of cell-cell adhesion \emph{vs.} individual cell mechanics on the aggregation properties of cell populations. }

\newpage
\appendix

\section{Stability analysis}
\label{app: stability analysis}

\noindent\textbf{First part of the experiments} {We first observe that the homogeneous distributions $u(\mathbf{x},t) = u^*$ and $c(\mathbf{x},t) = c^*$ are steady-states solutions of system \eqref{eq: final system 1} for $u^*$ and $c^*$ such that $f(u^*) = 0$ and $g(u^*,c^*) = 0$. In order to study their stability, we consider the system without spatial variations}
\begin{equation}
    \partial_tu=f(u)\ ,\ \ \zeta\partial_tc=g(u,c)\ ,
\end{equation}
and linearize the solution at $(u^*,c^*)$. We obtain
\begin{equation}
\partial_t \sigma = G\sigma\ ,\ \ \textnormal{where}\ \ \sigma= \begin{pmatrix}u-u^*\\c-c^*
    \end{pmatrix}\ \ \textnormal{and}\ \ 
G=\begin{pmatrix}f_u^*& 0 \\ \zeta^{-1} g_u^* & \zeta^{-1} g_c^*
    \end{pmatrix}\ ,
\end{equation}
{where the quantities $f_u^*,\ g_u^*$ and $g_c^*$ are the linearization slopes of $f$ and $g$: $f_u^* = f'(u^*),\ g^*_u = \partial_u g(u^*,c^*),\ g^*_c = \partial_c g(u^*,c^*)$.} The steady state is linearly stable if  $\textnormal{tr}(G)<0$ and $\textnormal{det}(G)>0$, which imposes the following constraints on the kinetic functions $f(u)$ and $g(u,c)$,
\begin{equation}
    f_u^*+\zeta^{-1}g_c^*<0\ \ \textnormal{and}\ \ f_u^*g_c^*>0\ .\label{eq: instability condition 1}
\end{equation}

{Note that in our case, $f^*_u = -r_0,\ g^*_u = 1,\ g^*_c = -1$ so the conditions are satisfied. }

We now {go back to} the full chemotactic system (\ref{eq: syst linear stability1}).
In order to investigate {the stability of the homogeneous steady-state, \emph{i.e.} the ability of the system to create patterns, we introduce a small parameter $\varepsilon \ll 1$ and write}
\begin{equation}
u=u^*+\varepsilon\tilde{u}(\mathbf{x},t)\ ,\ \ \ c=c^*+\varepsilon\tilde{c}(\mathbf{x},t)\ .\label{eq: expansion}
\end{equation}
We substitute (\ref{eq: expansion}) into (\ref{eq: syst linear stability1}) and, computing the first order terms with respect to $\varepsilon$ and neglecting higher order terms, the linearized system reads
\begin{equation}
\begin{aligned}
    \partial_t u & =\Delta u-A\phi_1(u^*)\Delta c+uf^*_u+cf^*_c\ ,\\
    \zeta\partial_t c & = \Delta c+ug^*_u+cg^*_c\ ,\label{eq: linearized system}
\end{aligned}
\end{equation}
where  $\phi_1(u^*)=u^*q_1(u^*)$. 
{We now look for perturbations of the form}
\begin{equation}
    u(\mathbf{x},t)=\sum_k a_k(t)\psi_k(\mathbf{x})\ \ \textnormal{and} \ \ \ c(\mathbf{x},t)=\sum_k b_k(t)\psi_k(\mathbf{x})\ ,\label{eq: fourier modes}
\end{equation}
where $(\psi_k)_{k\geq 1}$ is an orthonormal basis of $L^2(\rm{\Omega})$ and satisfies the following spatial eigenvalue problem
\begin{equation}
    -\Delta \psi_k=k^2\psi_k\ , \ \ \frac{\partial\psi_k}{\partial \eta}=0\ .\label{eq: psi}
\end{equation}
Then, the linearized system (\ref{eq: linearized system}) can be written as
\begin{equation}
    \begin{aligned}
     \partial_t (a_k)&=-a_kk^2+A\phi_1(u^*)b_kk^2+a_kf^*_u+b_kf^*_c\ ,\\
\zeta   \partial_t( b_k) & = - b_kk^2+a_kg^*_u+b_kg^*_c\ ,
    \end{aligned}\label{eq: system 42}
\end{equation}
where $k$ is the spatial eigenfunction, also called the wavenumber and $1/k$ is proportional to the wavelength $\omega$. In matrix form we can write (\ref{eq: system 42}) as $\partial_t X_k(t)=P_k(t)X_k(t)$ where
\begin{equation}
    X_k=\begin{pmatrix}
           a_k \\
           b_k 
         \end{pmatrix}\ ,\ \ \ P_k=\begin{pmatrix}
-k^2+f^*_u & A\phi_1(u^*)k^2+f^*_c \\
\zeta^{-1}g^*_u & \zeta^{-1}(-k^2+g^*_c)
\end{pmatrix}\label{eq: matrix form}\ .
\end{equation}
{
\begin{remark}
Since the solutions of the eigenvalue problem (\ref{eq: psi}) are simply sines and cosines, the ``size'' of various spatial patterns is measured by the wavelength of the trigonometric functions. For example, in one dimension when $0<x<L$, $\psi\propto \cos(n\pi x/L)$ and the wavelength is $\omega=1/k=L/n\pi$, where $n\in\mathbb{Z}$.
\end{remark}
}
If the matrix $P_k$ has eigenvalues with positive real part, then the homogeneous steady state $(u^*,\ c^*)$ is unstable, resulting in pattern formation. The characteristic polynomial related to (\ref{eq: matrix form}) is given by $\ell^2+a(k^2)\ell+b(k^2)=0$ where 
\begin{align}
    a(k^2) & = (1+\zeta^{-1})k^2-(f^*_u+\zeta^{-1}g^*_c)\ ,\label{eq: a expression}\\
    b(k^2) & = \zeta^{-1}k^4-\zeta^{-1}(g^*_c+f^*_u+g^*_uA\phi_1(u^*))k^2+\zeta^{-1}f_u^*g_c^*\ .\label{eq: b expression}
\end{align}
The eigenvalues $\ell$ determines the temporal growth of the eigenmodes, {and we require $\mathcal{R}e(\ell(k^2))>0$ for the homogeneous steady state to be unstable. Note that we only look for the eigenmodes $k\neq 0$ since we already guaranteed that the steady state is  stable in the absence of spatial perturbations, \emph{i.e.} $\mathcal{R}e(\ell(k^2=0))<0$ in (\ref{eq: instability condition 1})}. 
%This can happen if either the coefficient of $\ell$
%is negative or if $b(k^2)<0$ for some $k\neq 0$.

From the conditions (\ref{eq: instability condition 1}), we know that $a(k^2)>0$, hence the instability can  only occur if $b(k^2)<0$ for some $k$ so that the characteristic polynomial has one positive and one negative root. 
This implies
\begin{equation}
k^4-(g^*_c+f^*_u+g^*_uA\phi_1(u^*))k^2+f_u^*g_c^*<0\ .\label{eq: condition}
\end{equation}
We also know from (\ref{eq: instability condition 1}) that $f_u^*g_c^*>0$, then a necessary (but not sufficient) condition for $b(k^2)<0$ is 
\[
g^*_c+f^*_u+g^*_uA\phi_1(u^*)>0\ .
\]
{
\begin{remark}
The bifurcation between spatially stable and unstable modes occurs when the critical expression $b_\textnormal{min}(k^2_\textnormal{min})=0$ is satisfied.
\end{remark}
}

Moreover, to satisfy (\ref{eq: condition}) the minimum $b_\textnormal{min}$ must be negative \cite{murray2001mathematical}. Differentiation with respect to $k^2$ in (\ref{eq: b expression}) leads to
\begin{equation}
    b_\textnormal{min}(k^2_\textnormal{min})=-\frac{(g_c^*+f_u^*+g_u^*A\phi(u^*))^2}{4}+f_u^*g_c^*\ .\label{eq: critical value b}
\end{equation}
Hence,  the condition $b_\textnormal{min}<0$ implies
\begin{equation}
    g_c^*+f_u^*+g_u^*A\phi_1(u^*)>2\sqrt{(f_u^*g_c^*)}\ .
\end{equation}

To summarise, we have obtained the following conditions for the generation of spatial patterns for the chemotaxis system (\ref{eq: syst linear stability1}),
\begin{equation}
    \begin{aligned}
    f_u^*+\zeta^{-1}g_c^*&<0\ ,\  f_u^*g_c^*>0\ ,\  \zeta^{-1}g^*_c+f^*_u-\zeta^{-1}g^*_uA\phi_1(u^*)>0\ , \\ &g_c^*+f_u^*+g_u^*A\phi_1(u^*)>2\sqrt{f_u^*g_c^*}\ .
    \end{aligned}\label{eq: stability properties}
\end{equation}

From the analysis in this section, and using the particular forms of $\phi_1(u)$, $f$ and $g$ as in (\ref{eq: syst linear stability1}), it is easy to see that the spatially homogeneous steady states are $(0,0)$ and $(u_{\textnormal{max}}, u_{\textnormal{max}})$. We can check that $(0,0)$ is an unstable steady state, therefore we only work with $(u_{\textnormal{max}}, u_{\textnormal{max}})$ which, on the contrary, is stable.  The first and second properties in (\ref{eq: stability properties}) are immediately satisfied, \emph{i.e.}, $-(r_0+\zeta^{-1})<0$ and $r_0/\zeta>0$, respectively.
Finally, we have to check that the third and fourth conditions are satisfied as well. We have that
\begin{equation}
    - 1-r_0+A\phi_1(u^*)>2\sqrt{r_0}\ .\label{eq: necesary condition}
\end{equation}
Therefore, (\ref{eq: necesary condition}) is a necessary condition for pattern formation for the original system (\ref{eq: chemotaxis detail}). Considering $A$ as a bifurcation parameter, we can obtain a critical value $A^c$, so that we observe pattern formation if $A>A^c$. From (\ref{eq: necesary condition}) we get
\begin{equation}
    A^c=\frac{2\sqrt{r_0}+ 1+r_0}{ u_{\textnormal{max}}\left(1-\frac{u_{\textnormal{max}}}{\bar{u}} \right)}\ .\label{eq: critical chemosensitivity}
\end{equation}
The corresponding critical wavenumber $k^2_c$ is obtained from (\ref{eq: critical value b}) using (\ref{eq: necesary condition}) as follows,
\begin{equation}
    k^2_c=\frac{g_c^*+f_u^*+g_u^*A^c\phi_1(u^*)}{2}=\sqrt{f_u^*g_c^*}=\sqrt{r_0}\ .\label{eq: k critic 1}
\end{equation}
This means that, within the unstable range, $\mathcal{R}e(\ell(k^2))>0$ has a maximum wavenumber given by $k_c^2$. The range of linearly unstable modes $k_1^2<k^2<k_2^2$ is obtained from $b(k^2)=0$,
\begin{align}
    k_1^2=&\frac{-m-\sqrt{m^2-4 f_u^*g_c^*}}{2}<k^2<k^2_2=\frac{-m+\sqrt{m^2-4 f_u^*g_c^* }}{2}\ ,\label{eq: unstable modes}
\end{align}
where $m=-(g_c^*+f_u^*+g_u^*A\phi_1(u^*))$.\\

\noindent\textbf{Second part of the experiments} Following the same steps as  before  we linearise the system (\ref{eq: nondimensional second phase}) to get
\begin{equation}
    \begin{aligned}
    \partial_t w & = {{D}}_2(w^*,M^*)\Delta w-B\phi_2(w^*,M^*)\Delta c+wf^*_w\ ,\\ \zeta\partial_t c & = \Delta c+wg^*_w+cg^*_c\ , \\
    m\partial_t M & = \Delta M-\delta_0 w\ ,
    \end{aligned}
\end{equation}
where
$$D_2(w^*,M^*)=1+(\gamma(M)^*-1)\Bigl( \frac{w^*}{\bar{u}}\Bigr)^{\gamma(M)^*}\ ,\ \ \phi_2(w^*,M^*)=w\Bigl(1-\Bigl(\frac{w^*}{\bar{u}} \Bigr)^{\gamma(M)^*}\Bigr)\ ,$$
and $\gamma(M)^*$ is given by (\ref{eq: gamma M}) evaluated at $M^*$.
As in (\ref{eq: fourier modes}), we let
\begin{equation}
    w(\mathbf{x},t)=\sum_ka_k(t)\psi_k(\mathbf{x})\ ,\ \ c(\mathbf{x},t)=\sum_kb_k(t)\psi_k(\mathbf{x})\ ,\ \ M(\mathbf{x},t)=\sum_kc_k(t)\psi_k(\mathbf{x})\ ,
\end{equation}
where $\psi_k(\mathbf{x})$ satisfies (\ref{eq: psi}) and we obtain a system $\partial_t X_k(t)=P_k(t)X_k(t)$ where 
\begin{equation}
    X_k=\begin{pmatrix}
    a_k\\ b_k\\ c_k
    \end{pmatrix}\ ,\ \ P_k=\begin{pmatrix} -{{D}_2}(w^*,M^*)k^2+f_w^* & {B}{\phi}_2(w^*,M^*)k^2 & 0\\ \frac{1}{\zeta} g^*_w & \frac{1}{\zeta}(-k^2+g^*_c) & 0\\ -\frac{\delta_0}{m} & 0 & -\frac{1}{m}k^2 
    \end{pmatrix}\ .\label{eq: stability of the treatment}
\end{equation}
Similar to the previous section, the characteristic polynomial is given by $a(k^2)\ell^3+b(k^2)\ell^2+c(k^2)\ell+d(k^2)=0$ where $a(k^2)=-1$ and 
\begin{align}
    b(k^2) & = -\Bigl({D_2}+\frac{1}{\zeta}+\frac{1}{m}\Bigr)k^2+f_w^*+\frac{g_c^*}{\zeta} \ ,\\ c(k^2) & = -\Bigl(\frac{{D_2}}{\zeta}+\frac{{D_2}}{m}+\frac{1}{m\zeta}\Bigr)k^4+\Bigl(\frac{g_c^*{D_2}}{\zeta}+\frac{g_c^*}{\zeta m}+\frac{f_w^*}{\zeta}+\frac{f^*_w}{m}+\frac{g_w^*B\phi_2}{\zeta}\Bigr)k^2\nonumber\\ & \ \ \ -\frac{f^*_wg^*_c}{\zeta} \ ,\\ d(k^2) & =-\frac{{D_2}}{ m\zeta} k^6+\frac{1}{ m\zeta}( {D_2}g_c^*+g^*_wB\phi_2+f^*_w)k^4+\frac{1}{m\zeta }(-f^*_wg^*_c)k^2\ .
\end{align}

In general, the stability analysis for this cubic polynomial will require the Ruth–Hurwitz stability criterion \cite{hurwitz1964conditions} which states that the steady state is unstable if the coefficients of $a(k^2)\ell^3+b(k^2)\ell^2+c(k^2)\ell+d(k^2)=0$ satisfy the condition
\[
\frac{1}{(a(k^2))^2}\left(b(k^2)c(k^2)-a(k^2)d(k^2)\right)<0\ .
\]
However, from (\ref{eq: stability of the treatment}) we observe that one of the eigenvalues of the matrix $P_k$ is given by $\ell_1=\frac{-k^2}{m}<0$. The remaining two eigenvalues can be computed from the upper-left matrix
\begin{equation}
    \begin{pmatrix} -{D_2}(w^*,M^*)k^2+f^*_w & {B}\phi_2(w^*,M^*)k^2 \\ \frac{1}{\zeta} g^*_w & \frac{1}{\zeta}(-k^2+g^*_c)  \label{eq: other matrix}
    \end{pmatrix}\ ,
\end{equation}
following the same analysis as for the case without TMZ.

The characteristic polynomial $\ell^2+\bar{a}(k^2)\ell+\bar{b}(k^2)=0$ related to (\ref{eq: other matrix}) has coefficients
\begin{align}
    \bar{a}(k^2)&=\Bigl({{D_2}(w^*,M^*)}+\frac{1}{\zeta} \Bigr)k^2-f^*_w-\frac{g_c^*}{\zeta}\ ,\label{eq: a bar}\\ \bar{b}(k^2)&=\frac{{D_2}(w^*,M^*)}{\zeta}k^4-\Bigl(\frac{{D_2}(w^*,M^*)g_c^*}{\zeta}+\frac{B\phi_2(w^*,M^*)g_n^*}{\zeta}+\frac{f_w^*}{\zeta}\Bigr)k^2\nonumber\\  & \ \ \ +\frac{f^*_wg^*_c}{\zeta}\label{eq: b bar}\ .
\end{align}
For the steady state to be unstable we require, as before, that $\mathcal{R}e(\ell(k^2))>0$. Since $\bar{a}(k^2)>0$ the instability can only occur if $\bar{b}(k^2)<0$. Computing $\frac{\diff \bar{b}(k^2)}{\diff k^2}=0$  from (\ref{eq: b bar}) we obtain
\begin{equation}
    k^2_{\textnormal{min}}=\frac{D_2(w^*,M^*)g^*_c+g^*_wB\phi_2(w^*,M^*) +f^*_w}{2D_2(w^*,M^*)}\ .
\end{equation}
Hence from the condition $\bar{b}_{\textnormal{min}}(k^2_{\textnormal{min}})<0$ we get
\begin{equation}
    D_2(w^*,M^*)g^*_c+g^*_wB\phi_2(w^*,M^*)+f^*_w>\sqrt{4D_2(w^*,M^*)f^*_wg^*_c}\ .\label{eq: pattern formation condition}
\end{equation}
The spatially homogeneous steady state is ${(w^*,\ c^*,\ M^*)=(u_\textnormal{max},\ u_\textnormal{max},\ M_s)}$, where  \newline$ M_s=|\rm{\Omega}|^{-1}\int_{\rm{\Omega}} M(\mathbf{x},0)\diff\mathbf{x}$. Therefore, from (\ref{eq: pattern formation condition}) we obtain a critical constant $B^c$ so that for any $B>B^c$ we observe pattern formation. This critical constant is given by
\begin{equation}
    B^c=\frac{2\sqrt{\bar{r}_0D_2(u_\textnormal{max},M_s)}+D_2(u_\textnormal{max},M_s)+\tilde{r}_0}{u_\textnormal{max}\Bigl(1-\Bigl(\frac{u_\textnormal{max}}{\bar{u}} \Bigr)^{\gamma_{M_s}} \Bigr)}\ ,\label{eq: B critic}
\end{equation}
 where 
 \begin{equation}
 D_2(u_\textnormal{max},M_s)=1+(\gamma_{M_s}-1)\Bigl(\frac{u_\textnormal{max}}{\bar{u}} \Bigr)^{\gamma_{M_s}}\ .\label{eq: diffusion coefficient}
 \end{equation}
The corresponding critical wavemode is given by
\begin{equation}
    k^2_c=\frac{D_2(u_\textnormal{max},M_s)g_c^*+g_w^*B^c\phi_2(u_\textnormal{max},M_s)+f^*_w}{2D_2(u_\textnormal{max},M_s)}=\frac{\sqrt{D_2(u_\textnormal{max},M_s)(f^*_wg^*_c)}}{D_2(u_\textnormal{max},M_s)}\ .\label{eq: k critic 2}
\end{equation}
Finally, the unstable modes are $k^2<k^2_c$, where from $\bar{b}(k^2)=0$ we get
\begin{align}
    k_1^2=&\frac{-\bar{m}-\sqrt{\bar{m}^2-4D_2(u_\textnormal{max},M_s)(f_w^*g_c^*)}}{2D_2(u_\textnormal{max},M_s)}<k^2<k^2_2\nonumber\\ &=\frac{-\bar{m}+\sqrt{\bar{m}^2-4D_2(u_\textnormal{max},M_s)(f_w^*g_c^*)}}{2D_2(u_\textnormal{max},M_s)}\ ,
\end{align}
for $\bar{m}=-(D_2(u_\textnormal{max},M_s)g_c^*+g_w^*B\phi_2(u_\textnormal{max},M_s)+f^*_w)$.

\section{Description of the numerics}\label{app: numerics description}
We denote $H^1(\rm{\Omega})=W^{1,2}(\rm{\Omega})$  the usual Sobolev space and the standard $L^2$ inner product is denoted by $(\cdot,\cdot)$.
Let $\mathcal{T}^h$, {$h>0$}, be a quasi-uniform mesh of the domain $\rm{\Omega}$ consisting of $N$ disjoint piecewise linear mesh elements $K$ such that the discretized domain $\overline \rm{\Omega}_h = \bigcup_{K\in \mathcal{T}^h} \overline K$.  We denote the total number of nodes by $N_h$.

Let $h_K := \text{diam}(K)$ and $h = \max_{K} h_K$ 
%For $d=1$, the mesh elements $K$ \textcolor{red}{(In the current language it's confusing whether K are the triangles / intervals (= subsets of Rn) or the linear functions on them (= functions). Below it seems that K are triangles, but then it would be good to say instead that you consider piecewise linear finite element approximations on the mesh $T^h$. )} are chosen to be linear bar elements 
%\textcolor{red}{(I've never hear "two-nodded linear bar elements", but suspect that it means "linear")} 
and for $d=2$, we choose linear triangular elements. In addition, we assume that the mesh is acute \emph{i.e.} for $d=2$, each angle of the triangles can not exceed $\frac{\pi}{2}$. We must stress that for $d=2$ since the domain $\rm{\Omega}$ is circular, a small error of approximation is committed using $\Omega_h$.
We consider the standard finite element space associated with $\mathcal{T}^h$
\begin{equation}
    S^h := \{ \chi \in C(\overline{\rm{\Omega}}):  \restr{\chi}K \in \mathbb{P}^1(K), \quad \forall K \in \mathcal{T}^h \} \subset H^1(\rm{\Omega})\ , \label{eq: Sh}
\end{equation}
where $\mathbb{P}^1(K)$ denotes the space of first-order polynomials on $K$. 
Let $N_h$ be the total number of nodes of $\mathcal{T}^h$, $J_h$ the set of nodes and $\{x_j\}_{j=1,\dots,N_h}$ their coordinates. We call $\{ \chi_j\}_{j=1,\dots,N_h}$ the standard Lagrangian basis functions associated with the spatial mesh. 
We denote by $\pi^h:C(\overline{\rm{\Omega}})\to S^h$ the standard Lagrangian interpolation operator. %\textcolor{red}{(If you're emphasizing $H^1$ above, then $\pi^h$ is not defined on $H^1(\Omega)$ in 2d (where $H^1$ functions might not be continuous))} \textcolor{blue}{I agree but in our case we just need $\pi^h$ to be defined on $S^h$. Convergence is not studied in our case and if we try to do this analysis we can only hope to prove it in 1D due to the lumping.}. 
We also need the lumped scalar product to define the problem 
%since some functionals are not in the correct finite element space. The lumped scalar product is defined by
\begin{equation*}
    (\eta_1,\eta_2)^h = \int_{\rm{\Omega}} \pi^h\left(\eta_1(x) \eta_2(x)\right)\diff x \equiv \sum_{x_j \in J_h} (1,\chi_j) \eta_1(x_j) \eta_2(x_j)\ , \quad \eta_1, \eta_2 \in C(\overline{\rm{\Omega}})\ .
\end{equation*}
We define the standard mass and stiffness finite element matrices as $G$ and $K$, where
\begin{equation*}
	G_{ij} = \int_{\rm{\Omega}} \chi_i  \chi_j \, \text{d}x, \quad \text{ for } i,j = 1, \dots, N_h\ ,
\end{equation*} 
\begin{equation*}
	K_{ij} = \int_{\rm{\Omega}} \nabla \chi_i \nabla \chi_j \, \text{d}x, \quad \text{ for } i,j = 1, \dots, N_h\ .
\end{equation*} 
In the following finite element approximation of the Keller-Segel problem, the mass matrix is lumped, \emph{i.e.} the matrix becomes diagonal with each term being the row-sum of the corresponding row of the standard mass matrix,
\begin{equation*}
	G_{l,ii} := \sum_{j=1}^{N_h} G_{ij}\ , 	 \quad \text{ for } i = 1, \dots, N_h\ .
\end{equation*}

%\subsection{Description of the scheme}\label{sec: description of the scheme}
 
Given $N_T \in \mathbb{N}^\star$, let $\Delta t := T / N_T$ be the time-step where $T$ is the time corresponding to the end of the simulation. Let $t_n := n \Delta t$, $n = 0, \dots, N_T - 1$ be the temporal mesh. 
We approximate the continuous time derivative by $\frac{\partial u_h}{\partial t} \approx \frac{u^{n+1}_h - u^n_h}{\Delta t}$. We define
\begin{equation*}
    u_h^n(x) := \sum_{j =1}^{N_h} u^n_j \chi_j(x)\ , \quad \text{and} \quad c_h^n(x) := \sum_{j = 1}^{N_h} c^n_j \chi_j(x)\ ,
\end{equation*}
the finite element approximations of the cell density $u$ and the concentration of chemoattractant $c$
where $\{u^n_j\}_{j=1,\dots,N_h}$ and $\{c^n_j\}_{j=1,\dots,N_h}$ are unknowns and $\{\chi_j\}_{j=1,\dots,N_h}$ is the finite element basis. Then, the finite element problem associated with the system \eqref{eq: final system 1} reads as follows.

For each $n = 0, \dots, N_T -1$, find~$\{u_h^{n+1},c_h^{n+1}\}$ in $S^h \times S^h$, such that for all $\chi \in S^h$
\begin{align}
\left(\frac{u_h^{n+1}-u_h^n}{\Delta t},\chi\right)^h &+ \left({D}(u_h^n)\nabla u_h^{n+1},\nabla\chi\right) = \nonumber\\ & \left(A\left(\phi^{\text{upw}}(u_h^n)\right)\nabla c_h^n,\nabla \chi \right) + r_0 \left(u_h^n \left(1-\frac{u_h^n}{u_{\textnormal{max}}}\right),\chi\right)^h ,\label{eq:discr-eq-11}\\
\zeta \left(\frac{c_h^{n+1}-c_h^n}{\Delta t},\chi\right)^h &= -\left(\nabla c^{n+1}_h,\nabla \chi\right)+ \left(u_h^{n+1}-c_h^{n+1},\chi \right)^h.
\label{eq:discr-eq-12}
\end{align}

The finite element scheme associated with the system \eqref{eq: nondimensional second phase} including the effect of the treatment is the following
\begin{align}
\theta \Bigl(\frac{w_h^{n+1}-w_h^n}{\Delta t}&,\chi\Bigr)^h  + \left(\overline{D}(w_h^n,M^n_h)\nabla w_h^{n+1},\nabla\chi\right)\nonumber \\
&=\left(B\left(\overline{\phi}^{\text{upw}}_2(w_h^n,M^n_h)\right)\nabla c_h^n,\nabla \chi \right) + \tilde{r}\left(w^n_h\Bigl(1-\frac{w^n_h}{u_{\textnormal{max}}} \Bigr),\chi\right)^h,\label{eq:discr-eq-21} 
\end{align}
\begin{align}
\zeta \left(\frac{c_h^{n+1}-c_h^n}{\Delta t},\chi\right)^h &= -\left(\nabla c^{n+1}_h,\nabla \chi\right)+ \left(w_h^{n+1}-c_h^{n+1},\chi \right)^h,\\
m \left(\frac{M_h^{n+1}-M_h^n}{\Delta t},\chi\right)^h &= -\left(\nabla M^{n+1}_h,\nabla \chi\right)-\delta \left(M_h^{n+1},\chi \right)^h. \label{eq:discr-eq-23}
\end{align}

\par 

In order to describe how the chemotactic coefficients $\phi^{\text{uwp}}$ and $\overline{\phi}^{\text{upw}}_2$ are computed, let us rewrite the discrete equation \eqref{eq:discr-eq-11} into its matrix form
\[
(G_L  + \Delta t K_D ) \underline{u}^{n+1} = G_L \underline{u}^{n} + \Delta t K_\phi \underline{c}^{n} + \Delta t G_L \underline{g}^n\ ,
\]
where $\underline{u}^n$ and $\underline{c}^n$ are the vectors of coefficients which are the unknowns of the problem and $\underline{g}^n$ is a vector defined by  
\[
\left[\underline{g}^n\right]_i= \left(u_h^n \left(1-\frac{u_h^n}{u_{\textnormal{max}}}\right)\right)(x_i)\ ,\ \quad \text{for } i=1,\dots,N_h\ . 
\]
We define the finite element matrices associated with the diffusion $K_D$ and the advection $K_\phi$
\begin{equation}
K_{D,ij} = \int_{\rm{\Omega}} D(u^n_h) \nabla \chi_i \nabla \chi_j \, \diff x \quad \textnormal{ for } i,j = 1, \dots, N_h\ ,
\label{eq:matrix-diff}
\end{equation}
\begin{equation}
K_{\phi,ij} = \int_{\rm{\Omega}} \phi^{\text{upw}}\left(u^n_h(x_i),u^n_h(x_j)\right) \nabla \chi_i \nabla \chi_j \, \diff x  \quad  \textnormal{ for } i,j = 1, \dots, N_h\ .
\label{eq:matrix-adv}
\end{equation}
In \eqref{eq:matrix-diff}, the integral is computed using Gauss quadrature to deal with a potential choice of nonlinear functional for $D(u^n_h)$. The exactness of the quadrature is obtained using the adequate number of Gauss points since $D(u^n_h)$ is a polynomial of order $\gamma(M)+1$.  
%\textcolor{red}{Maybe it's not necessary for the problem \eqref{eq: final system 1} since $D(u^n_h)=1$.}\textcolor{magenta}{Agreed, but for (72)!?}\textcolor{cyan}{Okay, I'll describe it in a general way so that it can be non-linear. The number of gauss points depends on the value of $gamma$ such that the approximation is in fact exact.}
%\textcolor{blue}{[Maybe it's a good idea to explain a bit more how to deal with the discretisation of the diffusion constant, especially in (71) where the equation is nonlinear]}

  The chemotactic coefficient is computed using an upwing approach. For each element and depending on the direction of the gradient of the chemoattractant we have 
\begin{equation}
\phi^{\text{upw}}\left(u^n_h(x_i),u^n_h(x_j)\right) = \begin{cases}u^n_h(x_j)\left(1-\left(\frac{u^n_h(x_i)}{\overline{u}}\right) \right), \quad &\text{if} \quad c^{n}_{h}(x_j) - c^{n}_{h}(x_i) < 0,\\
u^n_h(x_i)\left(1-\left(\frac{u^n_h(x_j)}{\overline{u}}\right) \right), \quad &\text{otherwise}.
\end{cases}\label{eq: phi numerics}
\end{equation} 
Therefore, the chemotactic coefficient is chosen as function of the sign of the difference of chemoattractant between nodes connected by an edge. The same method is applied to compute $\overline{\phi}^{\text{upw}}_2$ in \eqref{eq:discr-eq-21}. 
The property of non-negativity of the cell density satisfied by our numerical scheme can be proved using similar arguments as in \cite{bubba2019positivity}.

\section{One dimensional numerical results}
\label{app: one dimension numerics}

\noindent\textbf{Influence of the domain size}

\noindent The unstable wavenumbers are discrete values, $k=n\pi/L$, that satisfy the relation (\ref{eq: unstable modes}) from Section \ref{sec: first phase}. The wavemode $n$ determines the number of aggregates depending on the length of the domain. For $A=7$ and the parameters specified at the beginning of this section we have $0.25<n\pi/L<0.4$. As shown in Figure \ref{fig: different L}, as we increase the length of the domain, the number of aggregates also increases. When the domain is large, as in Figure \ref{fig: L100}, we observe that some aggregates are merging together while others are emerging, \emph{i.e.}, they are formed from a zone of low cell density. This process is called coarsening \cite{painter2002volume} and is not observed in a small domain such as in Figure \ref{fig: L10}.
%\textcolor{blue}{I don't see why in figure (c) you say that clusters emerge from low density zone, the initial condition is u0=0.5 right? }

\begin{figure}[tbhp]
  \centering
  \subfloat[]{\label{fig: L10}\includegraphics[scale=0.25]{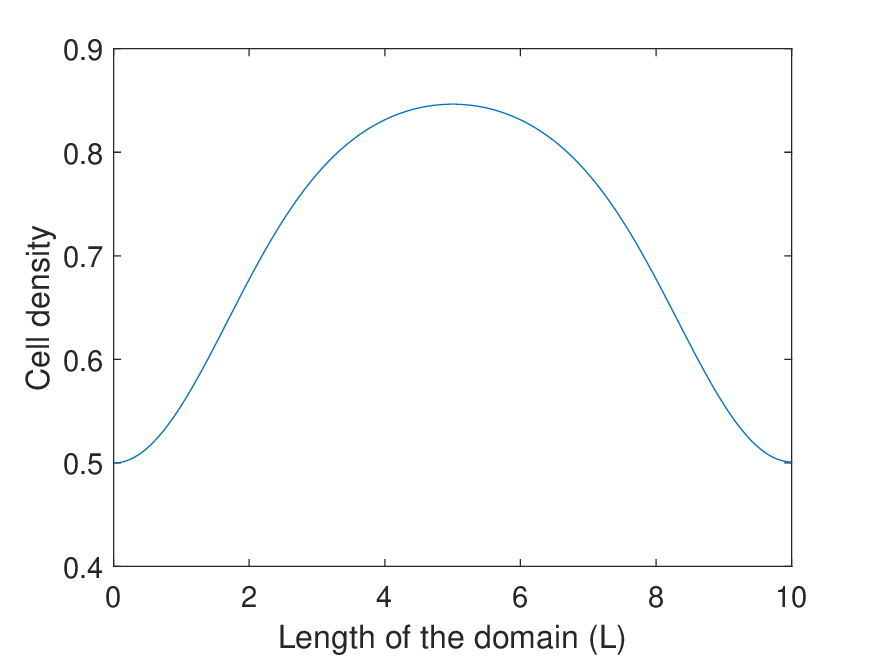}}
  \subfloat[]{\label{fig: L40}\includegraphics[scale=0.25]{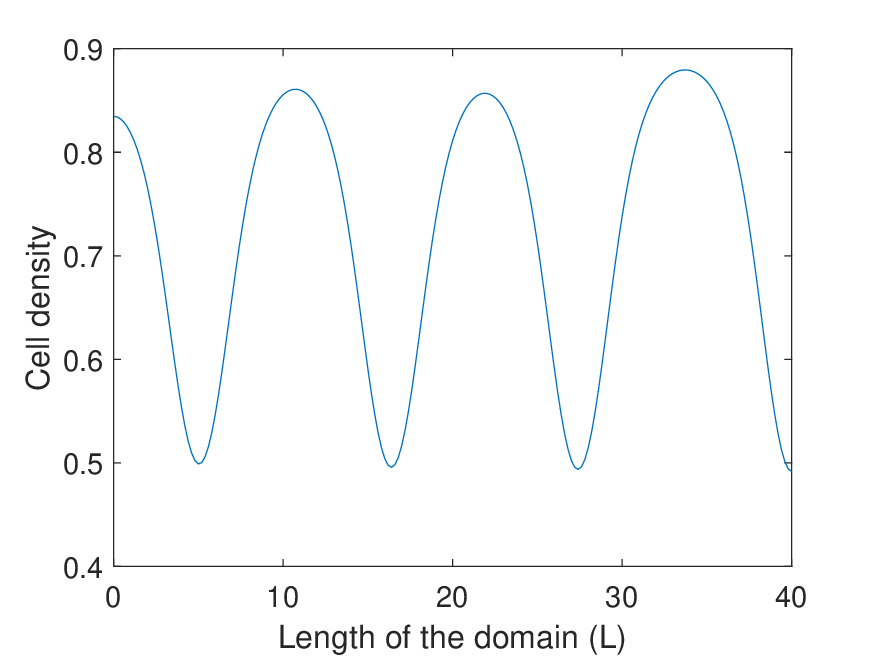}}
  \subfloat[]{\label{fig: L100}\includegraphics[scale=0.25]{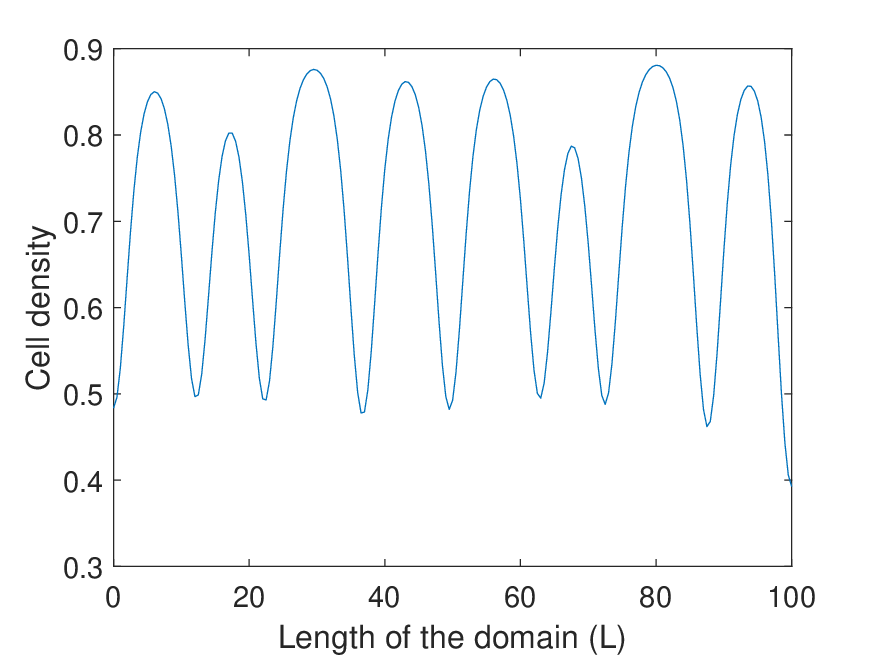}}
  \caption{Relationship between the wavenumber $k$ and the length of the domain $L$ for $A=7$.} 
  \label{fig: different L}
\end{figure}

For the remaining simulations in this section we fix the length of the domain to $L=40$, and all simulations are performed with carrying capacity $u_{\max} = 0.5$. \\

%\vspace{0.2cm}
\noindent\textbf{Comparison of the results with the stability analysis predictions}
{Here, we study the influence of the chemosensitivity parameter $A$ on the pattern dynamics and size of the aggregates, in the presence or absence of TMZ. In order to compare the solutions to the predictions of the stability analysis, the initial condition is a small perturbation around the homogeneous distribution $u_0 = 0.5$. Results of this section are obtained with a proliferation rate $r_0=0.1$. For such parameters, using the results of Section \ref{sec: linear stability}, the critical value of the chemosensitivity parameter without the treatment (in \textbf{P1}, when $M=0$ and therefore $\gamma(M) = 1$ ) is $A^c \approx 6.92$ and with the treatment uniformly distributed (for $\gamma(M) = 5$) is $B^c \approx 3.9$.}

{In the first part of the experiment, \emph{i.e.} without any treatment, we show in Figure \ref{fig: formation spheroids 01} the formation of patterns at different times, for $A = 7$ (close to the instability threshold, Figure \ref{fig: formation spheroids 7}), $A = 50 $ (Figure \ref{fig: formation spheroids 50}) and $A=150$ (Figure \ref{fig: formation spheroids 150}). We observe here the process of merging and emerging patterns through time.}  

\begin{figure}[tbhp]
  \centering
  \hspace{-0.8cm}\subfloat[\label{fig: formation spheroids 7}]{\includegraphics[scale=0.4]{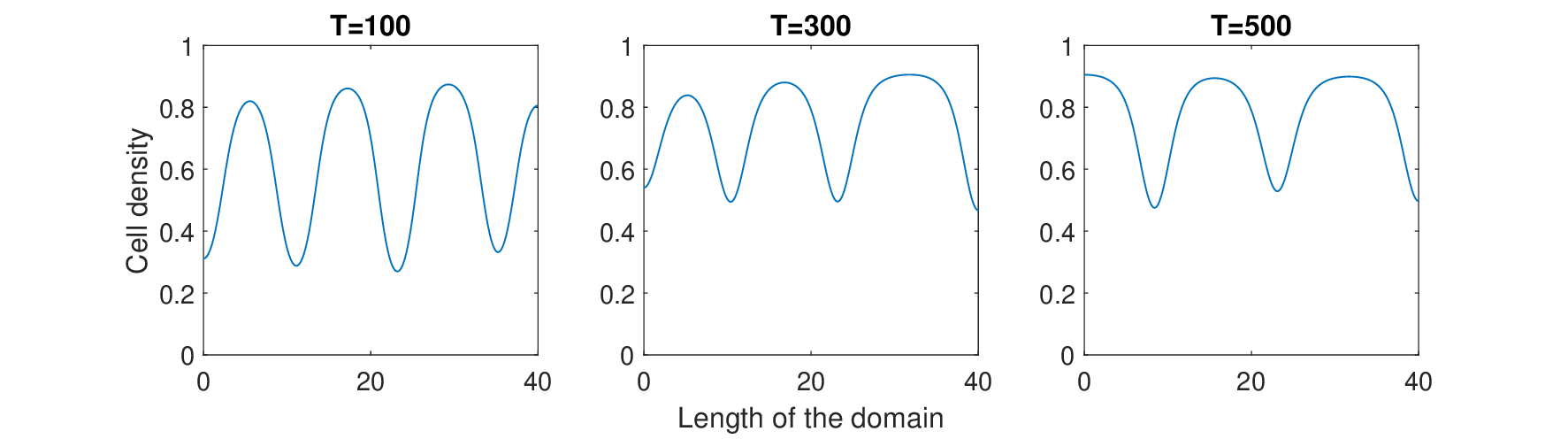}}
  
\hspace{-0.8cm}\subfloat[\label{fig: formation spheroids 50}]{
 \includegraphics[scale=0.4]{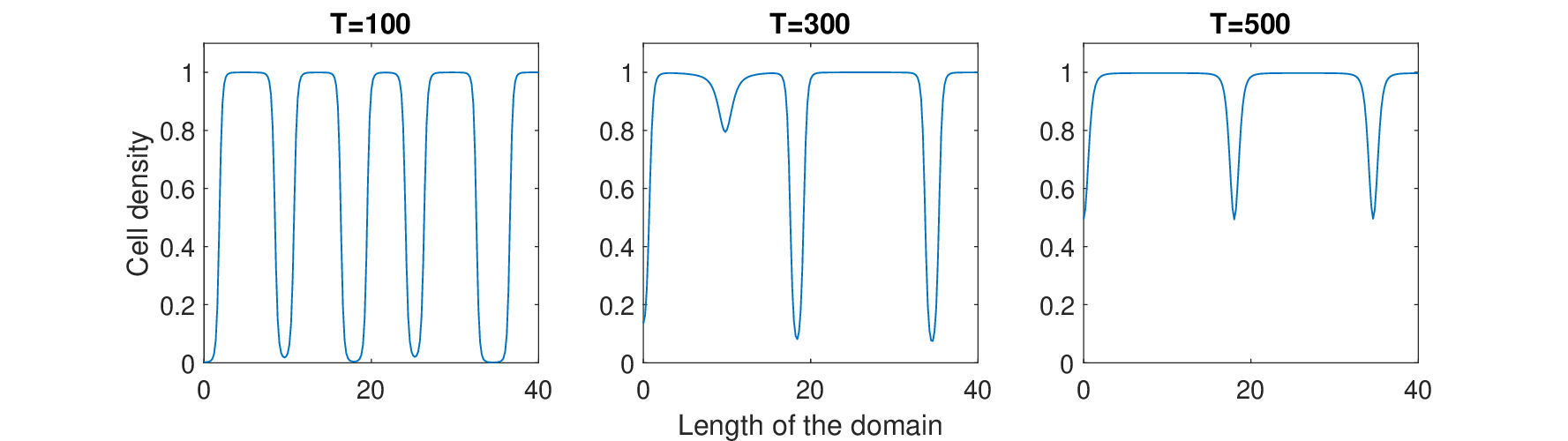}}

\hspace{-0.6cm}\subfloat[\label{fig: formation spheroids 150}]{\includegraphics[scale=0.4]{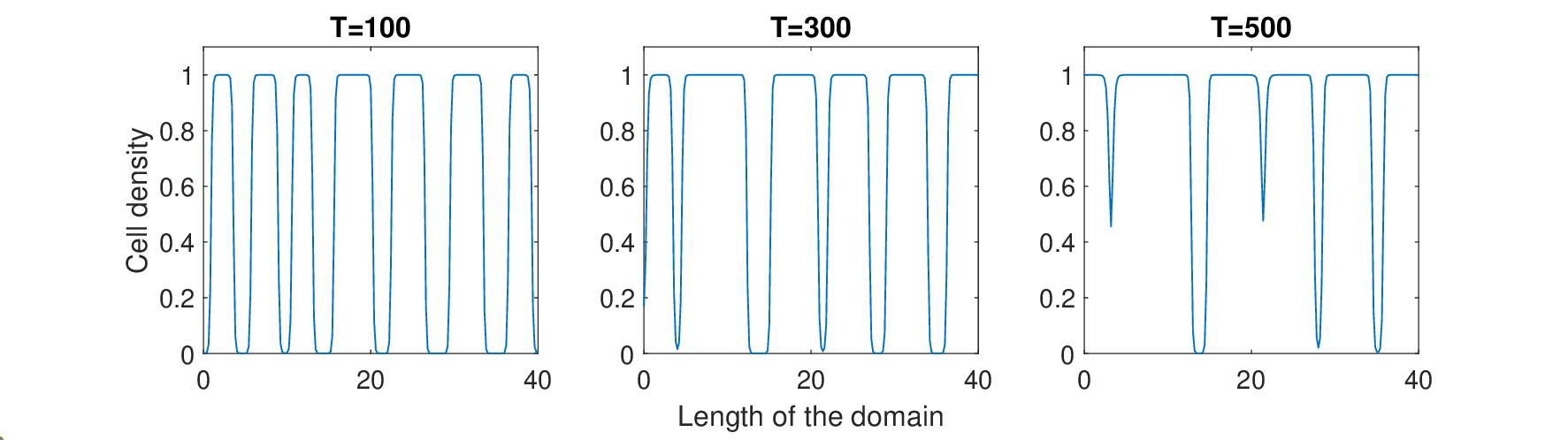}}
    \caption{Formation of the pattern without the treatment for (a) $A=7$, (b) $A=50$ and (c) $A=150$ when $r_0=0.1$ and $u_{\textnormal{max}}=u_0=0.5$. }\label{fig: formation spheroids 01}
\end{figure}

{As predicted by the stability analysis, larger values of the chemotactic sensitivity $A$ favor the emergence of smaller aggregates (compare Figures \ref{fig: formation spheroids 7} and \ref{fig: formation spheroids 150} for $A = 7$ and $A=150$, respectively). This is due to the fact that for larger chemotactic sensitivity, cells are more attracted to zones of high concentration of chemoattractant. The creation of patterns instead of the expansion of a homogeneous cell distribution is due to an instability which results from a positive feedback loop between the production of the chemical by the cells on one hand, and their attraction to high density zones of this chemical on the other. The chemotactic sensitivity $A$ must be large enough to trigger this instability, in order to compensate the competing effects of diffusion and of the logistic growth term, which on the contrary, tends to regulate the local cell density to the carrying capacity of the environment $u_\textnormal{max}$, and therefore induces cell death inside the aggregated patterns for which the density is above $u_\textnormal{max}$.}

When the drug is introduced uniformly in the domain starting from a homogeneous distribution of cells (\emph{i.e.} for $M=1$, $\gamma(M) = 5$) at time $t=0$, we show in Figure \ref{fig: formation spheroids 01 TMZ} the formation of patterns at different times, for chemosensitivity $B = 5$ (close to the instability threshold, Figure \ref{fig: formation spheroids 5 TMZ}), $B = 30 $ (Figure \ref{fig: formation spheroids 30 TMZ}) and $B=150$ (Figure \ref{fig: formation spheroids 150 TMZ}). Note that in this case we also let cells to proliferate with rate $r_0=0.1$.

\begin{figure}[tbhp]
  \centering
  \hspace{-0.8cm}\subfloat[\label{fig: formation spheroids 5 TMZ}]{\includegraphics[scale=0.4]{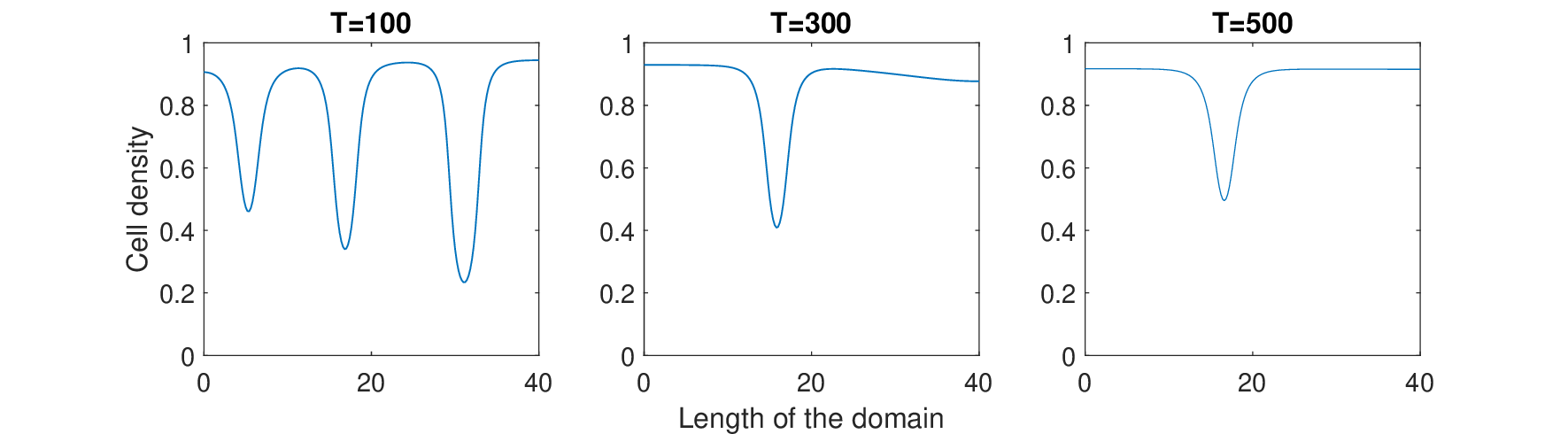}}
  
\hspace{-0.8cm}\subfloat[\label{fig: formation spheroids 30 TMZ}]{
 \includegraphics[scale=0.4]{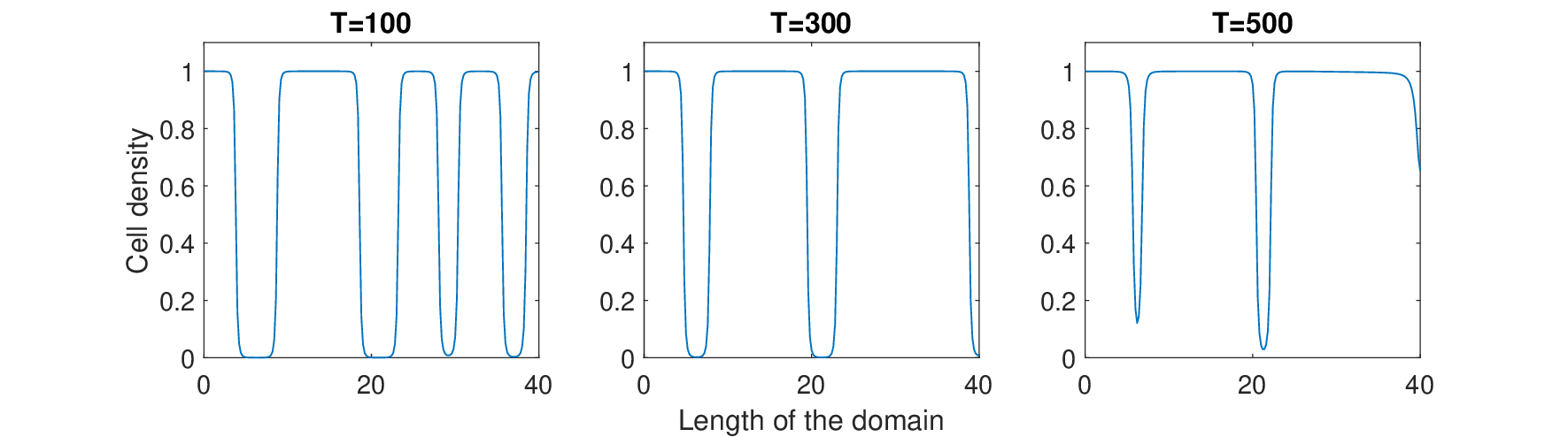}}

\hspace{-0.6cm}\subfloat[\label{fig: formation spheroids 150 TMZ}]{\includegraphics[scale=0.4]{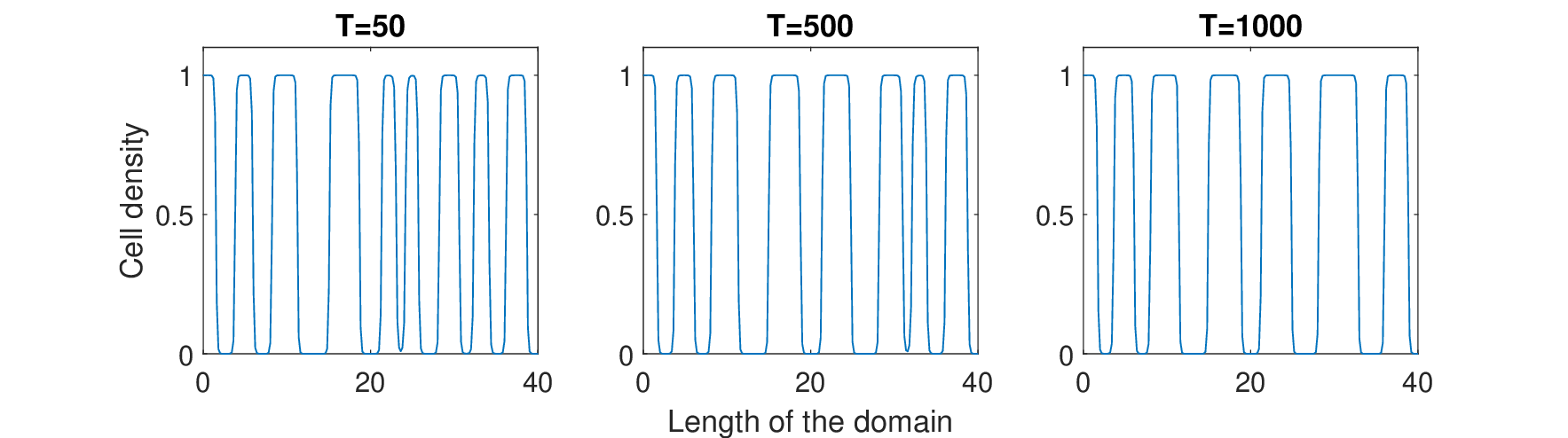}}
    \caption{Formation of the pattern with treatment ($M=1, \gamma(M) = 5$), included at $t=0$ when $u_0=0.5$, $r_0=0.1$ for (a) $B=5$, (b) $B=30$ and (c) $B=150$. }\label{fig: formation spheroids 01 TMZ}
\end{figure}

{As one can see in Figure \ref{fig: formation spheroids 01 TMZ}, we first observe again that increasing the chemosensitivity parameter $B$ results in the formation of smaller cell aggregates (compare Figures \ref{fig: formation spheroids 5 TMZ} and \ref{fig: formation spheroids 30 TMZ}). Very close to the instability threshold (Figure \ref{fig: formation spheroids 5 TMZ}), the system converges quickly to one aggregate, while for larger values of $B$ (Figure \ref{fig: formation spheroids 150 TMZ}) a large number of well-separated small aggregates arises. These clusters merge in time to form bigger clusters as for the case without the treatment. Moreover, comparing Figures. \ref{fig: formation spheroids 150} and \ref{fig: formation spheroids 150 TMZ}, we clearly observe that varying the mechanical state of cells (\emph{i.e.} passing from $\gamma(M) = 1$ to $\gamma(M) = 5$), leads to a change in the cell's aggregate size. When cells are more elastic, they tend to create smaller aggregates than when they behave as rigid spheres. }

%\section{Two dimensional numerical results}\label{sec: 2D numerical resuts}
%\noindent\textbf{Influence of the initial condition for the TMZ concentration}

%In Section \ref{sec: numerical 2d}, we presented two dimensional numerical results for the simulation of the model \eqref{eq: nondimensional second phase}. Two different initial conditions for the TMZ concentration have been considered: a constant function or a Gaussian centered in domain. 

\section*{Acknowledgments}

The authors would like to thank K.~J.~Painter and H.~Gimperlein for fruitful discussions and helpful suggestions. This research was funded by the Plan
Cancer/ITMO HTE (tumor heterogeneity and ecosystem) program (MoGlimaging).

%\begin{acknowledgements}
%If you'd like to thank anyone, place your comments here
%and remove the percent signs.
%\end{acknowledgements}

% Authors must disclose all relationships or interests that 
% could have direct or potential influence or impart bias on 
% the work: 
%
% \section*{Conflict of interest}
%
% The authors declare that they have no conflict of interest.

% BibTeX users please use one of
%\bibliographystyle{spbasic}      % basic style, author-year citations
%\bibliographystyle{spmpsci}      % mathematics and physical sciences
%\bibliographystyle{spphys}       % APS-like style for physics
%\bibliography{}   % name your BibTeX data base

% Non-BibTeX users please use

\bibliographystyle{abbrv}
%\bibliography{unsrt}
\bibliography{Glioblastoma_ref}

\iffalse

\fi
\end{document}